\documentclass[11 pt]{article}
\usepackage[utf8]{inputenc}

\pdfoutput=1

\usepackage[table,xcdraw]{xcolor}
\usepackage{amsmath, amsthm, amssymb}
\usepackage{algpseudocode,algorithm,algorithmicx}
\floatname{algorithm}{Alg}
\usepackage{mathtools}
\usepackage[numbers]{natbib}
\usepackage{comment}
\usepackage[most]{tcolorbox}
\usepackage{xfrac}
\usepackage{hyperref}
\usepackage{multirow}
\usepackage{caption}
\usepackage{bm}
\usepackage{newfloat}
\usepackage{enumitem}
\usepackage{bbm}
\usepackage{soul}
\usepackage{makecell}
\usepackage{wrapfig}
\usepackage{hhline}
\usepackage{tikz}
\usepackage{subcaption}
\usepackage{cleveref}

\usetikzlibrary{decorations.markings}
\usetikzlibrary{positioning, shapes.geometric}
\usetikzlibrary{calc}

\makeatletter
\renewcommand{\fnum@algorithm}{Alg$_{\thealgorithm}$}
\makeatother
\crefformat{algorithm}{#2\textup{\ensuremath{\mathrm{Alg}_{#1}}}#3}
\Crefformat{algorithm}{#2\textup{\ensuremath{\mathrm{Alg}_{#1}}}#3}
\usepackage[margin=1in]{geometry}

\title{Graphic Matroid Secretary without the Graph}
\author{Paul D\"utting\thanks{Google Research, Z\"urich, Switzerland. Email: \texttt{duetting@google.com}} \and Renato Paes Leme\thanks{Google Research, New York, NY, USA. Email: \texttt{renatoppl@google.com}} \and Martin P\'{a}l\thanks{Google, Mountain View, CA, USA. Email: \texttt{mpal@google.com}} \and Neel Patel\thanks{Google Research, Z\"urich, Switzerland. Email: \texttt{neelbpatel@google.com}}}
\date{}

\usepackage{aliascnt}

\newtheorem{theorem}{Theorem}[section]

\newaliascnt{corollary}{theorem}

\aliascntresetthe{corollary}

\newaliascnt{lemma}{theorem}
\newtheorem{lemma}[lemma]{Lemma}
\aliascntresetthe{lemma}

\newaliascnt{proposition}{theorem}

\aliascntresetthe{proposition}

\newaliascnt{claim}{theorem}
\newtheorem{claim}[claim]{Claim}
\aliascntresetthe{claim}

\newaliascnt{definition}{theorem}
\newtheorem{definition}[definition]{Definition}
\aliascntresetthe{definition}

\newaliascnt{question}{theorem}

\aliascntresetthe{question}

\newaliascnt{observation}{theorem}

\aliascntresetthe{observation}

\newaliascnt{fact}{theorem}

\aliascntresetthe{fact}

\newaliascnt{example}{theorem}

\aliascntresetthe{example}

\newaliascnt{note}{theorem}

\aliascntresetthe{note}

\newaliascnt{remark}{theorem}
\newtheorem{remark}[remark]{Remark}
\aliascntresetthe{remark}

\newcommand{\abs}[1]{\left| #1 \right|}

\renewcommand{\hat}{\widehat}

\newcommand{\loops}{L}

\DeclareMathOperator{\cl}{cl}
\DeclareMathOperator{\rank}{rank}
\DeclareMathOperator{\rk}{rank}

\DeclareMathOperator{\Circ}{Circ}
\DeclareMathOperator{\Opt}{Opt}
\DeclareMathOperator{\Alg}{Alg}

\DeclareMathOperator{\argmax}{argmax}

\let\vec\mathbf

\def\EH{E_H}
\def\EW{E_W}

\def\E{\mathbb{E}}

\def\F{\mathbb{F}}

\def\I{\mathcal{I}}
\def\M{\mathcal{M}}

\def\P{\mathcal{P}}

\def\Z{\mathcal{Z}}

\algnewcommand{\IIf}[1]{\State\algorithmicif\ #1\ \algorithmicthen}
\algnewcommand{\EndIIf}{\unskip\ \algorithmicend\ \algorithmicif}

\newcounter{proc}

\newenvironment{tbox}{
\vspace{0.2cm}
\begin{tcolorbox}[width=\textwidth,
                  enhanced,
                  boxsep=2pt,
                  left=1pt,
                  right=1pt,
                  top=4pt,
                  boxrule=1pt,
                  arc=0pt,
                  colback=white,
                  colframe=black,
                  breakable]
}{
\end{tcolorbox}
}

\newcommand{\tboxhrule}[0]{\vspace{0.1cm} \hrule \vspace{0.2cm}}

\newenvironment{titledtbox}[1]{\begin{tbox}#1 \tboxhrule}{\end{tbox}}

\newcolumntype{?}{!{\vrule width 1.3pt}}

\begin{document}

\maketitle

\begin{abstract}
    The matroid secretary problem (MSP) is one of the cleanest, and most captivating open problems in online algorithms. The famous MSP conjecture stipulates that there exists a constant-competitive algorithm, yet to date the best known algorithms are $O(\log \log (\text{rank}))$ competitive. It is widely believed that all information that an algorithm for the MSP should use is information that is available through an independence oracle on the already arrived elements.
    Despite this, there are natural classes of matroids where a constant-competitive algorithm is known if we are given additional upfront information about the matroid; while no such algorithm is known if all the algorithm can use is an independence oracle on the arrived elements.
    In this work, we tackle the perhaps most appealing such class of matroids, graphic matroids. We develop an algorithm for the MSP that has access to the independence oracle only. Our algorithm runs in polynomial time, and if the underlying matroid is graphic, it produces an independent set whose weight is at least $1/36$ of the maximum-weight independent set. Ours is the first constant-competitive algorithm for MSP on unknown graphic matroids. 
\end{abstract}

\section{Introduction}

Introduced by Babaioff et al.~\cite{babaioff2007matroids} in 2007, the matroid secretary problem (MSP) is a cornerstone of online algorithms, and the MSP conjecture—which postulates the existence of a constant-competitive algorithm—remains one of the field's most prominent open questions. Formally, in the MSP we are given a matroid $\M = (E,\I)$ with elements of unknown weights $w: E \rightarrow \mathbb{R}_{\geq 0}$ chosen by an \emph{adversary}, arriving in a uniformly random order. Upon arrival, the element reveals its weight and the algorithm has to immediately and irrevocably decide whether to select it. The goal is to select a set of feasible elements $I$ in the matroid, i.e., $I \in \I$, such that the total weight of the elements $w(I) = \sum_{e \in I} w(e)$ is as large as possible.

The best-known algorithms for the MSP are $O(\log \log(\rank(\M)))$-competitive \cite{Lachish14,FeldmanSZ18}, where $\rank(\M)$ is the \emph{rank} of the matroid $\M$, i.e., the cardinality of a largest independent set. Whereas the MSP conjecture remains open, extensive work in the field has led to constant-competitive algorithms for various natural classes of matroids, including uniform \cite{babaioff2007matroids}, graphic \cite{babaioff2009secretary,KorulaP09}, laminar \cite{im2011secretary,jaillet2013advances}, $k$-sparse \cite{soto2013matroid}, and regular matroids \cite{dinitz2014matroid}, as well as in restricted models like random weight assignment \cite{soto2013matroid,santiago2023constant} and even in random assignment with adversarial arrival~\cite{oveis2013variants}.

Interestingly, this separation between constant-competitive algorithms for special classes of matroids and super-constant competitive ratios for the general case is closely tied to which information the algorithm has access to. While the $O(\log \log(\rank(\M)))$-competitive algorithms only require an independence oracle on the arrived elements, some of the constant-competitive algorithms use some additional (upfront) information about the matroid. It is widely conjectured that a constant-competitive algorithm for general matroids should not require any information beyond what can be inferred from an independence oracle on the arrived elements \cite{santiago2023simple,santiago2023constant,FeldmanSZ18}.

A particular natural class of matroids with a gap in the best-known approximation ratio is the class of graphic matroids. In the simpler version of the problem, the algorithm receives edges $(u,v)$ represented by a pair of vertices (or a graph representation of the entire matroid) and can use the vertex information to make decisions. There has been a series of approximation algorithms for this problem: $3e$ by Babaioff et al. \cite{babaioff2007matroids,babaioff2018matroid}, $2e$ by Korula-P\'{a}l \cite{KorulaP09}, $4$ by Soto et al. \cite{soto2021strong} and more recently $3.95$ by Banihashem et al. \cite{BanihashemHKKMO25} and B\'erczi et al.~\cite{10.1007/978-3-031-93112-3_10} who also give a stronger approximation if the graph contains no parallel edges.
In contrast, prior to this work, there was no constant-competitive algorithm for graphic matroids without vertex information.

\paragraph{Graphic MSP without the Graph.} 
In the general matroid secretary problem we access the matroid through an independence oracle, i.e., we can query whether each subset of arrived elements is independent or not. In this paper, we consider the general MSP setting, with the promise that the matroid has a graphic representation. That is, we are promised that there exists a graph $G$ whose forests correspond to independent sets of the given matroid $\M$, but we are not given access to this graph. 
Instead, for any subset of elements (corresponding to a subset of edges of the unknown graph) we can query whether it contains a cycle. 

To get a sense of why this problem is hard, consider three elements such that each subset of them is independent. In this model, it is impossible at this point to know if they form a path, a star or if they are three loose edges (see \Cref{fig:paths-not-id}). If we were given the vertex information, we would know this exactly.  This intuition was extended by Cristi et al.~\cite{cristi2024online} to show that it is impossible to construct a graphic representation in an online manner. In fact, they also show that even constructing a constant-approximate representation online (for a suitable approximability notion) is impossible. This eliminates the hope that we can reduce the independence oracle version of the MSP to the version with vertex information.

\begin{figure}[h]\centering\begin{tikzpicture}[main_node/.style={circle, draw, thick},unlabeled_node/.style={circle, draw, fill=black,minimum size=5pt, inner sep=0pt},]\node[unlabeled_node] (a) at (0,0) {};\node[unlabeled_node] (b) at (1,0) {};\node[unlabeled_node] (c) at (2,0) {};\node[unlabeled_node] (d) at (3,0) {};\draw (a) -- (b);\draw (b) -- (c);\draw (c) -- (d);\begin{scope}[xshift=6.2cm]\node[unlabeled_node] (a) at (0,0) {};\node[unlabeled_node] (b) at (1,0) {};\node[unlabeled_node] (c) at (2,0) {};\node[unlabeled_node] (d) at (1,1) {};\draw (a) -- (b);\draw (b) -- (c);\draw (b) -- (d);\end{scope}\end{tikzpicture}\caption{This figure demonstrates that the same graphic matroid admits different representations. Hence, the independence oracle doesn't uniquely determine the underlying graph structure.}\label{fig:paths-not-id}
\end{figure}

\subsection{Our Results and Techniques}

Our main result is a polynomial-time constant-approximation for the unknown graphic matroid secretary problem, resolving the open question discussed above. 

\begin{theorem}[Informal]
There exists a randomized $36$-competitive polynomial-time algorithm for the MSP on unknown graphic matroids. The algorithm is ``probability-competitive'', in that each element of the optimal solution is selected with probability at least $1/36$.
\end{theorem}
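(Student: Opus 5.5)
Our plan is to imitate the Korula--P\'al approach for graphic matroids, replacing the vertex information it needs with structure that can be read off the independence oracle. In Korula--P\'al, each edge is oriented at random toward one endpoint. Each vertex then acts as a single-item secretary for the edges oriented toward it. Any set of edges with at most one edge per head vertex, and with all orientations consistent, is a forest.

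We cannot see vertices, so we will simulate them with \emph{cuts} or \emph{cocircuits}, which the oracle can detect. Concretely, we run the sample-then-select scheme. First we observe a sample $S$ consisting of the first $\mathrm{Bin}(n,1/2)$ arrivals, and we compute the maximum-weight basis $B$ of $S$ greedily using the oracle. When an element $e$ arrives after the sample, we consider its fundamental circuit $C(e)$ in $B \cup \{e\}$; this is computable with oracle queries. We accept $e$ only if $e$ is heavier than every element of $C(e)\cap B$, as in the ``forbidden sets'' framework of Soto--Turkieltaub--Verdugo.

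Plain acceptance of this kind can create cycles among the accepted elements, so we add a protection step. For each candidate we choose at random one ``anchor'' element $f \in C(e)\cap B$; this plays the role of a random endpoint. We then accept at most one element per anchor, and only if the resulting set remains independent, which we check with the oracle. The key structural lemma we aim for is this: in a graphic matroid, if every accepted element charges a distinct element of the sample forest $B$, and each anchor is chosen uniformly among a bounded number of candidates, then the accepted set is a forest with constant probability per element. We would prove this via a contraction argument. Contracting $B\setminus\{f\}$ behaves like a vertex split in the unknown graph, so the independence oracle faithfully encodes the relevant cut, and no graph representation is needed.

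For the analysis we follow Soto et al. and show that the algorithm is probability-competitive by bounding $\Pr[e \text{ selected}]$ for each $e \in \Opt$. We condition on $e$ arriving at time $t$ after the sample. We then bound the probability that some other accepted element blocks $e$ through its anchor, using that arrivals are uniform and using the $\mathrm{Bin}$ sample. Integrating over $t$ should give a constant such as $1/36$, obtained as roughly $\tfrac{1}{4}\cdot\tfrac{1}{9}$ from the sampling loss and the loss from collisions at anchors.

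The main obstacle is the structural lemma. In a graph, the Korula--P\'al forest guarantee relies on consistent orientation toward actual vertices, whereas fundamental circuits give only edges. Forbidden-set sizes are bounded by $2$ when one uses vertex stars, but the circuits in $B$ can be long. Our first attempt would be to show that for graphic matroids one can choose, using only the oracle, a representation of $B$'s components as stars through random contractions. This would reduce forbidden sets to size at most $2$, so that the Soto et al.\ bound for $\alpha$-forbidden algorithms, with ratio $\alpha e$ or a polynomial variant, yields a constant.
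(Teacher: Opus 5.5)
There is a genuine gap. Everything rests on your ``key structural lemma,'' which you leave unproved, and as stated it is false.

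\textbf{Distinct anchors do not give a forest.} Charging accepted elements to distinct anchors $f_e\in C(e)\cap B$ means taking at most one element across each fundamental cut of $B$; that cut is all that contracting $B\setminus\{f\}$ exposes. This does not yield a forest. Let $B$ be the path $v_0v_1v_2v_3v_4$ with $f_i=(v_{i-1},v_i)$, and let $e_1=(v_0,v_2)$, $e_2=(v_2,v_4)$, $e_3=(v_0,v_4)$. Their fundamental circuits meet $B$ in $\{f_1,f_2\}$, $\{f_3,f_4\}$ and $\{f_1,\dots,f_4\}$. The anchors $f_1,f_3,f_2$ are distinct, yet $e_1,e_2,e_3$ form a triangle. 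So your oracle check does all the work, and its rejection probability is exactly the quantity you have no handle on.

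\textbf{Anchors are not endpoints.} Korula--P\'al and the forbidden-set framework of Soto et al.\ charge each edge to one of its two \emph{endpoints}, and that is what keeps forbidden sets at size $2$. A random element of $C(e)\cap B$ is not an endpoint. Moreover $|C(e)\cap B|$ is unbounded, so anchors are not drawn from a bounded set. In general the oracle does not even determine which edges of $C(e)\cap B$ are the end edges of that path: the order along a circuit is exactly the ambiguity of \Cref{fig:paths-not-id}.

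\textbf{The star fallback does not close this.} A star picture is consistent only for elements whose circuits meet the uncontracted part of $B$ in at most two edges. You give no oracle-computable contraction under which every optimal element achieves this with constant probability without becoming a loop, and deciding which edges to keep again needs the missing order. Elements outside $\cl(S)$ have no anchor at all, so your post-sample phase has no mechanism for them. Finally, the constant $\tfrac14\cdot\tfrac19$ is asserted, not derived.

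\textbf{The missing idea.} The paper does not try to locate endpoints for every element; it splits elements by their connectivity to the sample.
\begin{itemize}
\item If $e$ is $3$-connected to $S$ (or parallel to an element of $S$), the three circuits form a $\theta$-structure. This forces $C(e,B)\setminus\{e\}$ to be a path in \emph{every} representation of $\M|S$, including the one Seymour's algorithm returns (\Cref{lemma:theta-matroid-path} together with \Cref{lemma:circuit-path}). The binary symmetric-difference argument makes the endpoints well defined. Such elements are then passed, with vertex labels, to a known-graph algorithm that selects each optimal element with probability at least $1/4$.
\item Elements that are only $(\le 2)$-connected are never embedded. They are picked \emph{inside the sample} by a randomized greedy with the blocking rule of \Cref{def:blocked_new}. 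A $2$-vertex cut and a union bound over four vertex pairs give selection probability $\beta(1-4\beta)$.
\end{itemize}
Mixing the two algorithms with $\gamma=2/3$ and $\beta=1/8$ gives $\min\{\tfrac14(1-\gamma)^2,\tfrac1{16}\gamma^2\}=\tfrac1{36}$. Your single post-sample phase has no counterpart to the second algorithm, and the weakly connected elements are exactly where the endpoint ambiguity lives.
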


Our algorithm conceptually splits the input into two parts: the first about $\gamma n$ elements we refer to as the sample $S$, and the remaining roughly $(1-\gamma)n$ elements. It then flips a coin, and uses one of two algorithms to pick an independent set. The first algorithm is designed to pick an independent subset of the sample $S$, and ignore the rest of the input. The second algorithm observes the sample elements $S$ without picking any of them. It uses a graphic representation of the sampled matroid $\M|S$ to produce a graphic representation of (a subset of) the remaining elements in an online manner.

A key conceptual insight of our work is the introduction of a notion of connectivity of an element with respect to the sample set $S$, which drives our entire analysis. At a high level, we consider an element $e$ to be highly connected if the endpoints of $e$ in a graphic representation of $S\cup\{e\}$ are connected by three internally vertex-disjoint paths, not counting $e$ itself (\Cref{def:connectivity}). Crucially, we formulate this connectivity entirely in terms of matroid circuits. This formulation is what allows the algorithm to classify arriving elements on the fly using only independence oracle queries, bridging the gap between graph topology and matroid access.

Our main technical contribution is a structural framework that uses this connectivity dichotomy to overcome the ambiguity of the unknown underlying graph. We show that the first algorithm picks up a constant fraction of the elements in the optimum independent set that are classified as weakly connected. The second algorithm picks up a constant fraction of the optimal elements classified as highly connected by dynamically embedding them into an \emph{arbitrary} valid representation of the sample; we prove that the strict topological constraints of these highly connected elements---which form a \emph{$\theta$-structure}---mathematically force a consistent embedding. In combination, we show that each element of the optimal solution will be selected by one of the two algorithms with sufficient probability, no matter how it is classified.

\paragraph{Dealing with Weakly Connected Elements.} 

Our first algorithm processes elements that are $(\leq 2)$-connected with respect to the sample, and are not parallel to another element in the sample. A natural instinct for these elements is to apply a simple greedy strategy: accept any arriving independent element with a constant probability. However, this approach fails because weak connectivity does not inherently prevent an element from being spanned by the selected sample.
Consider the graph in \Cref{fig:parallel-paths}. The pairs $(u,w)$ and $(w,v)$ are highly connected, but the edge $(u,v)$ is only $1$-connected. If the number of paths between $u-w$ and $w-v$ is sufficiently large, a greedy algorithm that independently selects edges with constant probability will almost certainly select a full path from $u$ to $v$ through $w$. This spans the edge $(u,v)$ early in the process, preventing it from being chosen upon arrival despite its weak global connectivity.

\begin{figure}[t]
\centering
\begin{tikzpicture}[
    main_node/.style={circle, draw, thick},
    unlabeled_node/.style={circle, draw, fill=black,minimum size=5pt, inner sep=0pt},
]

\node[main_node] (u) {$u$};
\node[main_node, right=4cm of u] (v) {$v$};

\node[main_node] at ($(u)!0.5!(v)$) [yshift=3cm] (w) {$w$};

\draw[thick] (u) -- (v);

\draw[thick] (u) -- (w);
\draw[thick] (v) -- (w);

\draw (u) to[bend left=80] node[unlabeled_node, pos=0.33] {} node[unlabeled_node, pos=0.66] {} (w);
\draw (u) to[bend left=60] node[unlabeled_node, pos=0.33] {} node[unlabeled_node, pos=0.66] {} (w);

\path (u) to[bend left=40] node[midway, rotate=135] {$\dots$} (w);

\draw (u) to[bend left=20] node[unlabeled_node, pos=0.33] {} node[unlabeled_node, pos=0.66] {} (w);

\draw (v) to[bend right=80] node[unlabeled_node, pos=0.33] {} node[unlabeled_node, pos=0.66] {} (w);
\draw (v) to[bend right=60] node[unlabeled_node, pos=0.33] {} node[unlabeled_node, pos=0.66] {} (w);

\path (v) to[bend right=40] node[midway, rotate=-135] {$\dots$} (w);

\draw (v) to[bend right=20] node[unlabeled_node, pos=0.33] {} node[unlabeled_node, pos=0.66] {} (w);

\end{tikzpicture}
\caption{A graph with a large number of parallel paths between $u-w$ and $v-w$ that passes through node $w$. Therefore, $u$ and $v$ are $1$ connected in the given graph but the edge $u-v$ is spanned with high probability by a sample which contains each edge independently with a constant probability.}
\label{fig:parallel-paths}
\end{figure}

To overcome the failure of the naive greedy approach, our high-level idea is to exploit the structural obstruction that certifies weak connectivity. For graphic matroids, an element being $(\leq 2)$-connected implies the existence of a $2$-vertex cut. If we can guarantee that the algorithm avoids heavily intersecting this obstruction, we can ensure that the weakly connected edge $e$ remains available upon its arrival. We achieve this by introducing the concept of a \emph{blocked} element (\Cref{def:blocked_new}). An arriving element $e$ is considered blocked if it is either already spanned by the selected elements, or if there exists a previously observed but unselected element $e'$ such that $e$ is spanned by $e'$ combined with the selected elements.

The primary technical challenge is balancing this condition. If the blocking rule is too strict, it successfully protects future elements but excessively rejects currently available ones. We prove that our specific formulation strikes the right balance: it is strong enough to keep future elements available, while ensuring that any weakly connected optimal element is unblocked (and thus selectable) with constant probability when it arrives.

\begin{figure}[t]
    \centering
    \begin{subfigure}[b]{0.48\textwidth}
        \centering
        \scalebox{0.8}{
        \begin{tikzpicture}[
            scale=0.55, 
            vertex_node/.style={circle, draw, fill=black, inner sep=1.5pt, minimum size=6pt},
            center_node/.style={circle, draw, fill=blue!50!white, inner sep=2pt, minimum size=8pt, font=\small, text=white},
            edge_style/.style={thick, black}, 
            center_edge_style/.style={thick, blue!70!black}, 
            polygon_radius/.initial=3cm, 
        ]

        \node[regular polygon, regular polygon sides=16, 
              minimum size=2*\pgfkeysvalueof{/tikz/polygon_radius}, 
              draw, thick, black] (hexadecagon) {};

        \foreach \i in {1,...,16} {
            \node[vertex_node] (v\i) at (hexadecagon.corner \i) {};
        }

        \node[center_node] (center_node) at (0,0) {};  

        \foreach \i in {1,...,16} {
            \draw[center_edge_style] (center_node) -- (v\i);
        }

        \end{tikzpicture}
        }
        \caption{Underlying graph.}
        \label{fig:wheel_full}
    \end{subfigure}%
    \hfill
    \begin{subfigure}[b]{0.48\textwidth}
        \centering
        \scalebox{0.8}{
        \begin{tikzpicture}[
            scale=0.55, 
            vertex_node/.style={circle, draw, fill=black, inner sep=1.5pt, minimum size=6pt},
            center_node/.style={circle, draw, fill=blue!50!white, inner sep=2pt, minimum size=8pt, font=\small, text=white},
            edge_style/.style={thick, black}, 
            center_edge_style/.style={thick, blue!70!black}, 
            polygon_radius/.initial=3cm, 
        ]

        \node[regular polygon, regular polygon sides=16, 
              minimum size=2*\pgfkeysvalueof{/tikz/polygon_radius}] (hexadecagon) {};

        \foreach \i in {1,...,9} {
            \node[vertex_node] (v\i) at (hexadecagon.corner \i) {};
        }

        \node[center_node] (center_node) at (0,0) {};  

        \foreach \i in {1,...,8} {
            \pgfmathtruncatemacro{\nexti}{\i+1}
            \draw[edge_style] (v\i) -- (v\nexti);
        }

        \foreach \i in {1,...,9} {
            \draw[center_edge_style] (center_node) -- (v\i);
        }

        \node[vertex_node] (p1) at (1.5, 0) {};
        \node[vertex_node] (p2) at (3.0, 0) {};
        \node[vertex_node] (p3) at (4.5, 0) {};
        
        \draw[center_edge_style] (center_node) -- (p1) -- (p2) -- (p3);

        \end{tikzpicture}
        }
        \caption{Graph build on sample $S$.}
        \label{fig:wheel_half}
    \end{subfigure}

    \caption{A central node connected to vertices. (a) A full wheel graph making all edges $3$-connected. (b) A subgraph constructed on sample $S$ which might not be consistent with the underlying full graph since some black edges are indistinguishable from blue edges in sample.}
    \label{fig:wheel_comparison}
\end{figure}

\paragraph{Dealing with Highly Connected Elements.} Our second algorithm is designed to collect reward from elements that are $(\ge3)$-connected with respect to the sample set $S$ (or are parallel to some element in $S$). A natural instinct for these elements is to leverage the uniqueness of graphic representations for 3-connected graphs \cite{truemper1980whitney} to reconstruct the true underlying graph. One might hope to build this global representation from the sample and seamlessly slot arriving highly connected elements into their correct vertex endpoints. However, this approach fails because the sample $S$ itself typically lacks global 3-connectivity, making its representation highly ambiguous. Consider the graph in \Cref{fig:wheel_comparison}. Even if the full underlying graph is 3-connected (\Cref{fig:wheel_full}), an online algorithm observing only a random fraction of its edges cannot deduce unobserved structural vertices (like the central vertex). Thus, any valid graphic representation of the sample built on the fly might be severely distorted and inconsistent with the true underlying structure (\Cref{fig:wheel_half}). Embedding future elements into this distorted representation would clash with their true structural dependencies, violating the underlying matroid structure.

To overcome the failure of global graph reconstruction, our high-level idea is to embrace the arbitrary nature of the sample's representation and dynamically embed arriving elements using the matroid's fundamental circuits as a structural guide. Instead of trying to find the true graph, we fix an \emph{arbitrary} valid graphic representation $G(\phi_S)$ and a basis $B$ of the sample. We achieve this embedding by computing the fundamental circuit $C(e,B)$ with respect to $B$ whenever a new element $e \notin S$ arrives. Let $P = C(e,B) \setminus \{e\}$. If $P$ happens to form a simple continuous path in our arbitrarily constructed representation $G(\phi_S)$, we safely extend our graph by assigning $e$ to connect the two endpoints of $P$. We call such an element \emph{$S$-aligned}.

The primary technical challenge is ensuring that this dynamic embedding strategy is both consistent and complete, despite the arbitrary choice of $G(\phi_S)$. If our representation is distorted, one might worry that an element could form circuits with multiple distinct subsets in $S$, leading to conflicting endpoint assignments, or that a highly connected element might fail to form a path entirely and be erroneously discarded. We prove that the structural properties of matroids neatly resolve both concerns. For consistency, we show that the symmetric difference properties of circuits in graphic matroids guarantee that if \emph{one} circuit involving $e$ maps to a path in $G(\phi_S)$, then \emph{all} subsets of $S$ forming a circuit with $e$ will also map to parallel paths sharing the exact same pair of endpoints. 
For completeness, we show that the obstruction certifying 3-connectivity is what we call a $\theta$-structure: three disjoint subsets of elements where the union of any two forms a circuit. The $\theta$-structure forces these interlocking cycles to be drawn as parallel paths, guaranteeing a valid path exists in \emph{any} arbitrarily chosen graphic representation of the sample.
By emitting each $S$-aligned element alongside its uniquely deduced vertex pair, we provide a direct, black-box reduction from the unknown graphic matroid problem on highly connected elements to the known graphic matroid secretary problem.

\paragraph{Random Order CRS for Unknown Graphic Matroids.} As a corollary of our techniques and result, we obtain an $O(1)$-balanced random-order CRS for unknown graphic matroids, strengthening the result from Santiago and Wang~\cite{santiago2023simple}, who proposed a simple $\frac{1}{96}$-balanced random-order CRS when upon arrival of the elements, the endpoints of the edges are also revealed. We defer the technical details to \Cref{sec:OCRS}.

\section{Matroid Preliminaries}

\paragraph{Matroids.}
A matroid $\M = (E, \I)$ 
is defined by a ground set $E$ and a non-empty family of independent sets $\I \subseteq 2^E$ containing $\emptyset$ that satisfy the matroid axioms: 
\begin{itemize}
\item Downward closure: If $X \subseteq Y \subseteq E$ and $Y \in \I$, then $X \in \I$.
\item Exchange property: If $X, Y \in \I$ and $|X| > |Y|$, then there exists an element $x \in X \setminus Y$ such that $Y \cup \{x\} \in \I$. 
\end{itemize}

\paragraph{Graphic Matroids.}
We say that a matroid admits a graphic representation if there exists a set $V$ and a map 
$\phi: E \rightarrow \binom{V}{2}$
such that $S \in \I$ iff $\phi(S)$ is acyclic.
We can think of each element $e \in E$ as an edge in an (undirected) graph with vertices $V$. In this representation, a set of elements (edges) is independent if it is a forest in the graph.

A matroid $\M$ is graphic if it admits a graphic representation. When it exists, the graphic representation is in general not unique. For example, given three elements $\{e_1, e_2, e_3\}$ with $\I = 2^E$, both representations in \Cref{fig:paths-not-id} are valid. 
In particular, a set of elements $P$ that forms a path in some graphic representation of $\M$, is not necessarily a path in every graphic representation of $\M$ ($P$ may not be connected, and/or may contain vertices of degree $>2$ in other representations).

Seymour~\cite{seymour1981recognizing} gave a polynomial-time algorithm to construct a valid graphic representation of a matroid $\M$,
as long as $\M$ is graphic. (The algorithm only needs access to an independence oracle for $\M$.)

\paragraph{Rank, Closure and Circuits.}
The rank of a matroid is a function $\rank:2^E \rightarrow \Z_+$ defined as the size of any maximal independent subset of $X$: $\rank(X) = \max\{|Y| \mid Y \in \I, Y \subseteq X\}$.
The \emph{closure} of a set $X$ is defined as the set of elements whose addition does not increase its rank: 
\[ \cl(X) = \{e\in E \mid \rank(X\cup \{e\}) = \rank(X) \} \]

A set $C$ is a \emph{circuit} if it is minimally dependent: that is, removal of any element from $C$ yields an independent set.
For any subset $X \subseteq E$, let $\Circ(X)$ denote the set of circuits contained in $X$:
\[ \Circ(X) = \{ C \subseteq X \mid C \notin \I \text{ and } S \in \I, \forall S \subsetneq C \}. \]
Note that for graphic matroids, a set $C$ is a circuit in $\M$ if and only if $C$ maps to a cycle in every graphic representation of $\M$.

\paragraph{Parallel Elements.} We say that two elements $e, e'$ in a matroid are parallel if $\rank(\{e,e'\}) = \rank(\{e\}) = \rank(\{e'\}) =1$. In a graphic matroid this corresponds to parallel edges.
If an edge $e$ is parallel to some $e'\in S$, any graphic representation of the sample $S$ is trivially extended to include
$e$ (since $e$ and $e'$ are parallel, $e$ must be incident to the same pair of vertices as $e'$). 
Such edges can be handled by the second algorithm, along with highly connected elements.

\paragraph{Optimal Basis.}
Given a subset $X \subseteq E$ we define $w(X) = \sum_{e \in X} w(e)$. We will use $\Opt(X)$ to denote: 
\[ \Opt(X) = \argmax_Y \{w(Y) \mid Y \subseteq X, Y \in \I \}. \]
If the weights $w$ are all distinct then there is a unique optimum subset for each $X$. If not, we will break ties lexicographically.

\paragraph{Unknown Graphic MSP.}
We consider the matroid secretary problem (MSP) on a graphic matroid, where
we are given access to the matroid structure but not to the representation. More precisely, the algorithm knows $n = \abs{E}$ but knows nothing about the weight of each element or the matroid (other than the promise that it is graphic). The algorithm processes the elements in random order: if $e_1, \hdots, e_n$ is a random permutation of $E$, then at time $t$, the algorithm learns the weight $w(e_t)$ and can query for any $X \subseteq E_t := \{e_1, \hdots, e_t\}$ whether $X \in \I.$ 

At time $t$, the algorithm has the choice to accept element $e_t$ or irrevocably reject it with the constraint that the set of accepted elements must be independent in the matroid. If $I$ is the accepted set, the competitive ratio of the algorithm is given by: 
\[
\frac{\E[w(I)]}{w(\Opt(E))}.
\]

\paragraph{Probability Distributions.} We use $\operatorname{Ber}(p)$ to denote the Bernoulli distribution with parameter $p$, $\operatorname{Bin}(n,p)$ to denote the Binomial distribution with parameters $n,p$, and $\operatorname{Unif}(0,1)$ to denote the uniform distribution over support $(0,1)$.

\section{Connectivity and Element Decomposition}  

Our central idea for approaching the unknown graphic MSP is to estimate the connectivity of each element as it arrives, applying a different algorithm depending on whether it is weakly or highly connected to the rest of the matroid. 
Since we lack access to the entire matroid, we must estimate element connectivity by sampling. Our algorithm for the weakly connected elements will select from the samples, while the algorithm for the highly connected elements will select from the remaining elements.

\paragraph{Samples.} 
A common approach to the MSP is to sample each element independently with probability $\gamma$, to ``learn'' properties of the instance at hand, 
and then use the learned information to make informed decisions on the remaining elements. Since the elements arrive in random order, the natural way to add each $e \in E$ to the sample set independently with probability $\gamma$ is to sample an index $m \sim \operatorname{Bin}(n, \gamma)$
and define the sample set $S = E_{m}$, i.e., the first $m$ elements. 
(Equivalently, one can think of sampling each element independently with probability $\gamma$; these two views coincide under random arrival order.)

\paragraph{Connectivity w.r.t.~Samples.} Consider a graphic matroid. Given sampled elements, we define the connectivity of each element w.r.t.~the samples as follows: for any element $e \in E$ and set of samples $S \subseteq E$, we say that element $e$ is $k$-connected w.r.t.~$S$ if there exist at least $k$ circuits $C_1, \dots, C_k \subseteq S \cup \{e\}$ containing $e$, such that for any distinct $i,j \in [k]$, $C_i \cap C_j = \{e\}$ and $C_i \Delta C_j$ is a circuit. More formally:

\begin{definition}[$k$-connectivity] \label{def:connectivity} 
Consider a graphic matroid. 
Given a subset $S \subseteq E$ and an element $e \in E$, we say that $e$ is $k$-connected with respect to $S$ if there exist circuits $C_1, \dots, C_k \in \Circ(S \cup \{e\})$ such that for any distinct $i,j \in [k]$ it holds that $C_i \cap C_j = \{e\}$ and $C_i \Delta C_j \in \Circ(S)$.
\end{definition}

Note that $e$ is $k$-connected w.r.t.~$S$ if and only if it is $k$-connected w.r.t.~$S\setminus\{e\}$.

Next, we show that our definition of an element $e \in E\setminus S$ being $k$-connected with respect to $S$ is equivalent to the property that in any graphic representation of the matroid there are $k$ internally vertex-disjoint paths in $S$ that connect the two endpoints of $e=(u,v)$ (disjoint except for the two endpoints themselves). The advantage of our seemingly less immediate definition capturing this property is that it can be checked with access to an independence oracle only.

\begin{lemma}\label{lemma:vertex_connectivity}
Consider a graphic matroid. For any integer $k\geq 1$, an element $e$ is $k$-connected with respect to $S$ iff in any graphic representation of the matroid, if $e$ is mapped to edge $(u,v)$, then there are $k$ internally vertex-disjoint paths in $S$ connecting $u$ and $v$.
\end{lemma}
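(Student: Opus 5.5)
We prove the two directions separately, working in a fixed but arbitrary graphic representation $\phi$ with $e\mapsto(u,v)$. In any graphic representation the circuits of $\M$ are exactly the edge sets of cycles. We treat loops and the degenerate case $u=v$ separately (then $\{e\}$ is a circuit, and it is excluded, or trivial, because $C_i\cap C_j=\{e\}$ with $C_i\Delta C_j$ a circuit).

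\emph{($\Leftarrow$).} Suppose there are $k$ internally vertex-disjoint $u$--$v$ paths $P_1,\dots,P_k$ in $\phi(S\setminus\{e\})$. Set $C_i=P_i\cup\{e\}$; this is a cycle, hence a circuit in $\Circ(S\cup\{e\})$. For $i\neq j$ the paths share no edges (they are internally disjoint, and each path, if it has length $\ge 2$, contains no edge $uv$; at most one path can be the single edge $uv$ parallel to $e$). Hence $C_i\cap C_j=\{e\}$. Moreover $C_i\Delta C_j=P_i\cup P_j$, and two internally vertex-disjoint $u$--$v$ paths form a cycle, so $P_i\cup P_j\in\Circ(S)$. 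Since $\phi$ was arbitrary (and in fact we only need one representation), this direction is routine.

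\emph{($\Rightarrow$).} Given circuits $C_1,\dots,C_k$ as in \Cref{def:connectivity}, fix any representation $\phi$. Each $C_i$ is a cycle through the edge $uv$, so $P_i:=C_i\setminus\{e\}$ is a $u$--$v$ path in $\phi(S)$. The paths are pairwise edge-disjoint because $C_i\cap C_j=\{e\}$. The key step is to upgrade edge-disjointness to internal vertex-disjointness using the hypothesis that $P_i\cup P_j=C_i\Delta C_j$ is a circuit, i.e.\ a single cycle in $\phi$. Suppose $P_i$ and $P_j$ share an internal vertex $x\neq u,v$. Then the edge set $P_i\cup P_j$ is a connected edge-disjoint union of two $u$--$v$ paths, so every vertex has even degree. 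But $x$ has degree $4$ in $P_i\cup P_j$, since it is internal to both paths and the paths are edge-disjoint. A cycle has all degrees equal to $2$, so this is a contradiction. Hence the $P_i$ are pairwise internally vertex-disjoint, which proves the claim for every representation simultaneously.

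The main obstacle is the bookkeeping for degenerate cases: $e$ possibly lying in $S$ (handled by the remark that connectivity w.r.t.\ $S$ equals connectivity w.r.t.\ $S\setminus\{e\}$), a path consisting of a single edge parallel to $e$, and $e$ being a loop. The phrase ``in any representation'' on the right-hand side also needs care. Circuits are representation-independent, so the forward direction holds in every representation at once. For the backward direction it suffices to have the paths in one representation, and the statement's ``any'' hypothesis provides them.
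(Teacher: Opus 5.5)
Your proof is correct and follows essentially the same route as the paper: circuits through $e$ are $u$--$v$ paths in every representation, and the condition that $C_i\Delta C_j$ is a circuit is exactly internal vertex-disjointness of $C_i\setminus\{e\}$ and $C_j\setminus\{e\}$. The only difference is local: you rule out a shared internal vertex because it would have degree $4$ in $P_i\cup P_j$, whereas the paper splits $C_i\Delta C_j$ into two edge-disjoint cycles through that vertex; you are also more careful than the paper about edge-disjointness when some paths are single edges parallel to $e$.
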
 
\begin{proof}
Note that if $e\in C \in \Circ({S \cup \{e\}})$, then in any graphic representation $C$ is a simple cycle of the graph, and hence $C \setminus \{e\}$ is a path connecting the endpoints of $e$. Finally, observe that if $C_i \setminus \{e\}$ and $C_j \setminus \{e\}$ are vertex-disjoint (other than the endpoints of $e$), then $C_i \Delta C_j$ is a circuit (see left side of \Cref{fig:k-connected-oracle}). Conversely, if they have a common vertex $w$ in their graphic representation, then we can decompose $C_i \Delta C_j$ into at least two edge-disjoint cycles (sharing vertex $w$), and hence it properly contains a circuit and is not a minimal dependent set.
\end{proof}

\begin{figure}[t]
\centering
\begin{tikzpicture}[
    main_node/.style={circle, draw, thick},
    unlabeled_node/.style={circle, draw, fill=black,minimum size=5pt, inner sep=0pt},
]
\node[main_node] (u) {$u$};
\node[main_node, right=2cm of u] (v) {$v$};
\foreach \i in {1,2} {
    \node[unlabeled_node, right=1cm of u, yshift=-3cm+\i*2cm] (a\i) {};
}
\path[draw, dashed] (u) edge node[below] {$e$} (v);
\draw (u) to[bend left] (a2);
\draw (a2) to[bend left] (v);
\draw (u) to[bend right] (a1);
\draw (a1) to[bend right] (v);

\begin{scope}[xshift=6.2cm]
\node[main_node] (u) {$u$};
\node[main_node, right=2cm of u] (v) {$v$};
\node[unlabeled_node, right=.9cm of u, yshift=.62cm] (w) {};
\path[draw, dashed] (u) edge node[below] {$e$} (v);
\draw (u) to[out=30, in=90+10]  (v);
\draw (u) to[out=90-10, in=180-30] (v);
\end{scope}
\end{tikzpicture}
\caption{On the left, when the circuits $C_1$ and $C_2$ are internally vertex disjoint, then $C_1 \Delta C_2$ is a circuit. On the right, if $C_1$ and $C_2$ share a vertex then $C_1 \Delta C_2$ is a union of circuits.} 
\label{fig:k-connected-oracle}
\end{figure}

\paragraph{Element Decomposition.} We partition the elements
in $E$ based on their connectivity w.r.t.~the sample: 
\begin{align*}
  \EH &:= \{e \in E \mid \text{$e$ is $3$-connected w.r.t.~$S$ or $e$ is parallel to some $e'\in S$}\}  \\
  \EW &:= E \setminus \EH.
\end{align*}

The elements in $\EH$ are highly connected; those in $\EW$ are weakly connected. We note that the sets $\EH$ and $\EW$ are random as they depend on the sample $S$. In the following sections, we design two algorithms: 
\Cref{alg:weakly-connected} obtains a constant approximation to the optimal weight contained in $\EW \cap S$. 
\Cref{alg:highly-connected-alt} obtains a constant approximation to the optimal weight contained in $\EH \setminus S$. 

Although $\EW \cap S$ and $\EH \setminus S$ do not partition $E$, the uniformly random arrival order ensures they contain a constant fraction of the optimal weight of $\EW$ and $\EH$, respectively. Thus, approximating the optimum on just these two disjoint sets suffices to achieve an overall constant competitive ratio.

\section{Dealing with Weakly Connected Elements}\label{sec:weakly_connected}

Our first algorithm for unknown graphic matroids obtains a constant fraction of the total weight of the elements in $\Opt(E) \cap \EW \cap S$, which are elements in the sample that are $(\leq 2)$-connected. 
(Note: The definition of $\EW$ ensures that $\EW \cap S$ doesn't contain any parallel edges.)
The algorithm is a randomized greedy procedure with a carefully defined ``blocking'' condition such that each element in $\EW \cap S$ remains available with a constant probability.

\subsection{Blocked Elements via Dependence}

Intuitively, we consider an element to be \emph{unblocked} if its addition to the already accepted elements yields an independent set, even if we were to retroactively add any single previously arrived and rejected element. We formalize this as follows.

\begin{definition}[Blocked element]\label{def:blocked_new}
At time $t$, let $E_{t-1}=\{e_1,\dots,e_{t-1}\}$ be the set of elements that arrived before $e_t$, and let $X_{t-1}\subseteq E_{t-1}$ be the set selected by \Cref{alg:weakly-connected} so far.
We say that the arriving element $e_t\notin E_{t-1}$ is \emph{blocked} by $(X_{t-1},E_{t-1})$ if either:
\begin{itemize}
\item $e_t$ is dependent on $X_{t-1}$, that is, $e_t\in \cl(X_{t-1})$; or 
\item there exists an element $e'\in E_{t-1}$ such that $e_t \in \cl(X_{t-1}\cup\{e'\})$.
\end{itemize}
We say that $e_t$ is \emph{unblocked} if it is not blocked.
\end{definition}

Note that this notion of blocking is stronger than just checking if a newly arrived element can be feasibly added to the solution. As we shall argue, this is precisely the obstruction needed for weakly connected elements, enabling a clean union bound argument.

\subsection{Algorithm Description}

\Cref{alg:weakly-connected} processes the sample elements $e_1,\dots,e_m$ (with $m\sim\operatorname{Bin}(n,\gamma)$). At each time $t\le m$, it considers $e_t$ and selects it with probability $\beta$ provided it is unblocked.

\begin{algorithm}
	\begin{algorithmic}[1]
		\Require Independence oracle, $n$ (total number of elements), random arrival order $\pi$ of elements. 
            \State Let $X \gets \emptyset$
            \State Sample $m \sim \operatorname{Bin}(n, \gamma)$
		\ForAll{$t= 1,\dots, m$}
            \State Sample $Z_{t} \sim  \operatorname{Ber}\left( \beta \right )$  
            \If{$Z_t = 1$ \textbf{and} $e_t$ is unblocked by $(X_{t-1},E_{t-1})$}
                \State Select element $e_t$ and update $X \gets X \cup \{e_t\}$
            \EndIf        
        \EndFor
	\end{algorithmic}
	\caption{Secretary Algorithm for Weakly Connected Elements}
    \label{alg:weakly-connected}
\end{algorithm}

We note that \Cref{alg:weakly-connected} trivially returns a feasible set, as it explicitly queries the independence oracle to ensure the selected elements remain independent.

\subsection{Bounding the Probabilities of Selecting \texorpdfstring{$\EW$}{E\_W}-Elements}

We begin by showing that in any graphic representation $\phi^*$ (not known to the algorithm), any two vertices remain disconnected with probability at least $1-\beta$.

\begin{lemma}\label{lemma:nodes-likely-disconnected_new}
Fix any graphic representation $\phi^*$ of $E$ and any two vertices $u,v$ in $\phi^*$. Let $X$ be the set selected by \Cref{alg:weakly-connected}.  
Then
\[
\Pr[\text{$u$ and $v$ are connected in $\phi^*(X)$}] \;\le\; \beta.
\]
\end{lemma}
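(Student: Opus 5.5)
We will show that $u$ and $v$ can become connected only through one specific arriving element, and that this element is selected with probability at most $\beta$. Call an arriving element $e_s$ a \emph{candidate} if, in $\phi^*(X_{s-1})$, the vertices $u$ and $v$ lie in distinct components and $\phi^*(e_s)$ has one endpoint in the component of $u$ and the other in the component of $v$. The set $X$ only grows, so components only merge. Hence $u$ and $v$ are connected in $\phi^*(X)$ if and only if some candidate is selected: the first selected element that merges the component of $u$ with that of $v$ is, by definition, a candidate at its arrival time.

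Let $f = e_t$ be the \emph{first} candidate to arrive, if one exists, among $e_1,\dots,e_m$. The key step is that no later element $e_s$ with $s > t$ can be an unblocked candidate, whether or not $f$ was selected. To see this, suppose $e_s$ is a candidate at time $s$. We have $X_{t-1} \subseteq X_{s-1}$, and the endpoints of $f$ lie in the components of $u$ and $v$ in $\phi^*(X_{t-1})$. These are contained in the components of $u$ and $v$ in $\phi^*(X_{s-1})$. Therefore $u$ and $v$ are connected in $\phi^*(X_{s-1} \cup \{f\})$. Since $e_s$ joins the component of $u$ to the component of $v$ in $\phi^*(X_{s-1})$, both of its endpoints lie in one component of $\phi^*(X_{s-1}\cup\{f\})$. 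So $e_s \in \cl(X_{s-1} \cup \{f\})$ with $f \in E_{s-1}$, which means $e_s$ is blocked by the second clause of \Cref{def:blocked_new}. This argument uses only $f \in E_{s-1}$ and never that $f$ was rejected. This is exactly why the blocking rule quantifies over all previously arrived elements, not only the selected ones.

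Consequently, $u$ and $v$ are connected in $\phi^*(X)$ only if $f$ exists and is selected, which requires $Z_t = 1$. The time $t$ is determined by the arrival order, $m$, and $Z_1,\dots,Z_{t-1}$. Conditioned on this history, $Z_t \sim \operatorname{Ber}(\beta)$ is independent of it. Summing over the possible values of $t$ gives $\Pr[\text{$u$ and $v$ connected}] \le \Pr[Z_t = 1 \text{ and } f \text{ exists}] \le \beta$.

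The main subtlety is the monotonicity step: we must check that the components of $u$ and $v$ can only grow, so the endpoints of $f$ stay in those components at every later time. We must also check the degenerate case where $u$ and $v$ are already connected in $\phi^*(X_{s-1} \cup \{f\})$ before $e_s$ arrives. Neither is a real obstacle once the candidate notion is defined relative to the current $X_{s-1}$.
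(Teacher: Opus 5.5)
Your proof is correct and takes essentially the same route as the paper. Both take the first arriving element $f=e_\tau$ that would join the components of $u$ and $v$, show that every later such element lies in $\cl(X_{s-1}\cup\{f\})$ and is therefore blocked by the second clause, and conclude that connectivity forces $Z_\tau=1$, which gives probability at most $\beta$ after summing over $\tau$. Your write-up is slightly more careful than the paper's: the paper asserts that $e_\tau$ is always unblocked, which can fail if an earlier rejected element bridges the two components at time $\tau$, whereas you use only the one-sided implication that connection requires $f$ to be selected, and that is all the bound needs.
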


\begin{proof}
Let $\tau$ be the first time index (if it exists) such that the arriving edge $e_\tau$ would connect $u$ and $v$ in $\phi^*(X_{\tau-1}\cup \{e_\tau\})$. If no such edge ever arrives, $u$ and $v$ are never connected.

Because $\tau$ is the first such time, $e_\tau$ is unblocked. Therefore, $e_\tau$ is selected (and $u$ and $v$ actually become connected) if and only if the coin flip $Z_\tau = 1$. If $Z_\tau = 0$, $e_\tau$ is rejected, and $u$ and $v$ remain disconnected in $\phi^*(X_\tau)$.

Crucially, if $Z_\tau = 0$, $u$ and $v$ will never become connected in the future. Any future edge $e_t$ (for $t > \tau$) that would connect $u$ and $v$ in $\phi^*(X_{t-1}\cup \{e_t\})$ must form a cycle with $e_\tau$ in $\phi^*(X_{t-1}\cup\{e_\tau, e_t\})$. This cycle implies that $e_t \in \cl(X_{t-1}\cup\{e_\tau\})$. Because $e_\tau \in E_{t-1}$, $e_t$ is blocked by $e_\tau$ (see \Cref{fig:blocking_condition}).

In particular, conditioned on the history up to time $\tau$, the event that $u$ and $v$ ever become connected requires that $Z_\tau=1$. Since $Z_\tau$ is independent of the history and $\Pr[Z_\tau=1]=\beta$, we have
\begin{align*}
\Pr[\text{$u$ and $v$ ever become connected}] 
&\;=\; \sum_{t} \Pr[\tau=t]\cdot \Pr[Z_t=1\mid \tau=t] \\
&\;\le\; \sum_t \Pr[\tau=t]\cdot \beta \\
&\;\le\; \beta,
\end{align*}
completing the proof.
\end{proof}

\begin{figure}[h]
\centering
\begin{tikzpicture}[
    node/.style={circle, fill=black, inner sep=2pt},
    main line/.style={thick},
    dashed edge/.style={dashed, thick}
]

    \node[node, label=below:{\Large $u$}] (u) at (0,0) {};
    \node[node] (p1) at (1.2,0) {};
    \node[node] (j1) at (2.4,0) {}; 
    
    \node[node] (j2) at (6.4,0) {}; 
    \node[node] (p2) at (7.6,0) {};
    \node[node, label=above:{\Large $v$}] (v) at (8.8,0) {};

    \draw[main line] (u) -- (p1) -- (j1);
    \draw[main line] (j2) -- (p2) -- (v);

    \node[node] (u_start) at (3.8, 1.2) {}; 
    \node[node] (u_end)   at (5.0, 1.2) {};
    
    \draw[main line] (j1) to[out=90, in=180] (u_start);
    \draw[dashed edge] (u_start) -- node[above=3pt] {\Large $e_t$} (u_end);
    \draw[main line] (u_end) to[out=0, in=90] (j2);

    \node[node] at (2.75, 0.85) {};
    \node[node] at (6.05, 0.85) {};

    \node[node] (l_start) at (3.8, -1.2) {};
    \node[node] (l_end)   at (5.0, -1.2) {};

    \draw[main line] (j1) to[out=-90, in=180] (l_start);
    \draw[dashed edge] (l_start) -- node[below=3pt] {\Large $e_\tau$} (l_end);
    \draw[main line] (l_end) to[out=0, in=-90] (j2);

    \node[node] at (2.75, -0.85) {};
    \node[node] at (6.05, -0.85) {};

\end{tikzpicture}
\caption{Illustration of the blocking condition in the proof of \Cref{lemma:nodes-likely-disconnected_new}.} 
\label{fig:blocking_condition}

\end{figure}

We next establish a link between blocked weakly connected elements, and (small) vertex cuts in any graphic representation of the underlying matroid.

\begin{definition}[Vertex cut]
Consider a representation $\phi:E \rightarrow {V \choose 2}$ and an edge $e$ with endpoints $\phi(e) = \{u,v\}$. We say that $W \subseteq V \setminus \{u,v\}$ is an $e$-\emph{vertex-cut} if after removing vertices $W$ from the graph $\phi(E \setminus \{e\})$, vertices $u$ and $v$ get disconnected.
\end{definition}

\begin{lemma}\label{lemma:blocked_implies_reach_cut}
Fix a graphic representation $\phi^*$ of $E_{t}$, and let $e_t$ have endpoints $\{u,v\}$ under $\phi^*$. Suppose $e_t$ is $(\le 2)$-connected w.r.t.~$E_{t-1}$, and there is no $e'\in E_{t-1}$ parallel to it.
Let $W$ be an $e_t$-vertex-cut in $\phi^*(E_{t-1})$ with $|W|\le 2$. 
If $e_t$ is blocked by $(X_{t-1},E_{t-1})$ (\Cref{def:blocked_new}), then there exists a pair $(x,w)\in \{u,v\}\times W$ such that $x$ and $w$ are connected in $\phi^*(X_{t-1})$.
\end{lemma}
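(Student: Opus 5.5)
We argue by contraposition: assume no vertex $x\in\{u,v\}$ is connected to any $w\in W$ in $\phi^*(X_{t-1})$, and show $e_t$ is unblocked. The key is to read both blocking conditions in the graph: $e_t\in\cl(Y)$ for $Y\subseteq E_{t-1}$ holds iff $u$ and $v$ are connected in $\phi^*(Y)$. Since $X_{t-1}\subseteq E_{t-1}$, every $u$--$v$ path in $\phi^*(X_{t-1}\cup\{e'\})$ lies in $\phi^*(E_{t-1})$. Deleting $W$ separates $u$ from $v$ there, so such a path must pass through a vertex of $W$, unless it avoids $W$ entirely, which is impossible.

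\emph{First blocking condition.} Suppose $u$ and $v$ are connected in $\phi^*(X_{t-1})$. The path lies in $\phi^*(E_{t-1})$, so it meets some $w\in W$. Then $u$ and $w$ are connected in $\phi^*(X_{t-1})$, contradicting the assumption. The case $W=\emptyset$ is also covered: $u$ and $v$ are then disconnected in $\phi^*(E_{t-1})$, so no blocking is possible at all.

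\emph{Second blocking condition.} Suppose $u$ and $v$ are connected in $\phi^*(X_{t-1}\cup\{e'\})$ for some $e'\in E_{t-1}$. By the first case they are not connected in $\phi^*(X_{t-1})$, so the path $P$ uses $e'=(a,b)$ exactly once. Orient $P$ so that it reads: a subpath from $u$ to $a$ in $X_{t-1}$, then $e'$, then a subpath from $b$ to $v$ in $X_{t-1}$. Since $P\subseteq \phi^*(E_{t-1})$, it meets $W$ at some vertex $w$.
\begin{itemize}
\item If $w$ lies on the $u$--$a$ subpath, then $u$ and $w$ are connected in $\phi^*(X_{t-1})$.
\item If $w$ lies on the $b$--$v$ subpath, then $v$ and $w$ are connected in $\phi^*(X_{t-1})$.
\end{itemize}
Since $a$ and $b$ themselves lie on these subpaths, the only remaining possibility is that $W$ avoids all of $P$. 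That case is impossible, so either way we contradict the assumption.

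The other hypotheses ($(\le2)$-connectivity and the absence of parallel elements) seem needed only to guarantee that such a $W$ with $|W|\le 2$ exists. Since the lemma hands us $W$ directly, they play no essential role here. One degenerate case needs a check: $W\ni a$ with $e'$ incident to $u$, for example $e'=(u,a)$. Then the $u$--$a$ subpath is trivial, but $a=w$ still lies on the $b$--$v$ side as its endpoint, which is connected to $v$ in $X_{t-1}$, so the argument goes through.

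The main obstacle is making the translation between closure and graph connectivity precise, including loops and vertices not in $\phi^*(X_{t-1})$. I expect the remaining steps to be routine path-splitting.
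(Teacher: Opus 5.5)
Your proof is correct and follows the paper's argument essentially verbatim. Apart from the contrapositive framing, you make the same two-case split on the blocking condition, use the same fact that any $u$--$v$ path in $\phi^*(E_{t-1})$ must meet $W$, and in the second case split the path at $e'$ into two $X_{t-1}$-subpaths, one of which contains a vertex of $W$. Your ``degenerate case'' paragraph reuses the letter $a$ for two different vertices, but it is unnecessary anyway, since the vertex set of $P$ is exactly the union of the vertex sets of the two subpaths.
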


\begin{proof}
If $e_t$ is dependent on $X_{t-1}$, then $\phi^*(X_{t-1})$ contains a $u$--$v$ path. Since $W$ separates $u$ and $v$ in $\phi^*(E_{t-1})$, every $u$--$v$ path intersects $W$, hence some $w\in W$ lies on that path. In particular, either $u$ is connected to $w$ in $\phi^*(X_{t-1})$ or $v$ is connected to $w$ in $\phi^*(X_{t-1})$ (indeed both hold along the path).
If instead $e_t$ is independent of $X_{t-1}$ but there exists $e'\in E_{t-1}$ such that $e_t$ is dependent on $X_{t-1}\cup\{e'\}$, then $\phi^*(X_{t-1}\cup\{e'\})$ contains a $u$--$v$ path. Removing the single edge $e'$ from that path leaves a walk that connects $u$ to one endpoint of $e'$ using only edges in $X_{t-1}$, and connects the other endpoint of $e'$ to $v$ using only edges in $X_{t-1}$. Since $W$ is a vertex cut in $\phi^*(E_{t-1})$, at least one of these two $X_{t-1}$-subpaths must intersect $W$, yielding some $w\in W$ that is connected in $\phi^*(X_{t-1})$ to $u$ or to $v$.
\end{proof}

The previous lemma shows that there are very few specific vertex pairs whose connection could trigger the blocking condition for a weakly connected element. This allows us to bound the overall probability of a weakly connected element being blocked via a simple union bound.

\begin{lemma}\label{lemma:likely-not-blocked_new}
Consider a graphic matroid.
Let $e_t\in \EW\cap S$ be an arriving element processed by \Cref{alg:weakly-connected}.  
Then the probability that $e_t$ is not blocked is at least $1-4\beta$. Hence, the probability that $e_t$ is accepted (selected) is at least $\beta(1-4\beta)$.
\end{lemma}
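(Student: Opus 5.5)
Fix a graphic representation $\phi^*$ of $E$, and let $e_t\in \EW\cap S$ have endpoints $\{u,v\}$. The plan is to produce a small vertex cut certifying weak connectivity, reduce blocking to the connection of at most four vertex pairs via \Cref{lemma:blocked_implies_reach_cut}, and bound each connection probability by $\beta$ using \Cref{lemma:nodes-likely-disconnected_new}.

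First, since $e_t\in\EW$, it is not $3$-connected w.r.t.\ $S$ and is not parallel to any element of $S$. Because $E_{t-1}\subseteq S\setminus\{e_t\}$, it is then also not $3$-connected w.r.t.\ $E_{t-1}$ (connectivity is monotone in the sample and unaffected by removing $e_t$), and no $e'\in E_{t-1}$ is parallel to it. By \Cref{lemma:vertex_connectivity}, $\phi^*(E_{t-1})$ has at most two internally vertex-disjoint $u$--$v$ paths. Since $u,v$ are not joined by a parallel edge in $E_{t-1}$, Menger's theorem (vertex version, applicable because $u$ and $v$ are non-adjacent in $\phi^*(E_{t-1})$) gives an $e_t$-vertex-cut $W\subseteq V\setminus\{u,v\}$ in $\phi^*(E_{t-1})$ with $|W|\le 2$. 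The cut should be taken in $\phi^*(E_{t-1})$, restricted from $\phi^*$, since that is the graph \Cref{lemma:blocked_implies_reach_cut} refers to. The one subtlety is that $W$ depends on $E_{t-1}$. I would therefore condition on the arrival order $\pi$ and on $m$ (equivalently, on $E_{t-1}$ and the event $t\le m$), which fixes $W$. The only remaining randomness is then the coins $Z_1,\dots,Z_{t-1}$ and $Z_t$.

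Second, by \Cref{lemma:blocked_implies_reach_cut}, if $e_t$ is blocked then some pair $(x,w)\in\{u,v\}\times W$ is connected in $\phi^*(X_{t-1})$. There are at most $4$ such pairs. For each pair, I would invoke \Cref{lemma:nodes-likely-disconnected_new}, whose argument applies verbatim conditioned on the fixed order: the first time an edge would connect $x$ and $w$, that edge is unblocked and accepted only with probability $\beta$, and afterwards every such edge is blocked. This gives $\Pr[x,w \text{ connected in } \phi^*(X_{t-1})]\le\beta$, since $X_{t-1}\subseteq X$ and connectivity is monotone. A union bound then yields $\Pr[e_t \text{ blocked}]\le 4\beta$.

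Finally, $Z_t$ is independent of $X_{t-1}$, so $\Pr[\text{accept}]=\beta\cdot\Pr[\text{unblocked}]\ge\beta(1-4\beta)$, and averaging over the conditioning preserves both bounds. I expect the main obstacle to be the conditioning step: I must check that \Cref{lemma:nodes-likely-disconnected_new} holds with the order (and hence the cut $W$) fixed in advance, which it does because its proof only uses the fresh coin $Z_\tau$. I must also justify the Menger step when $u$ or $v$ has no path to the other at all; in that case $W=\emptyset$ trivially works.
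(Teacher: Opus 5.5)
Your proposal is correct and follows the paper's proof step for step. The steps are: a cut $W$ with $|W|\le 2$ in $\phi^*(E_{t-1})$; reducing blocking to the at most four pairs in $\{u,v\}\times W$ via \Cref{lemma:blocked_implies_reach_cut}; a union bound using \Cref{lemma:nodes-likely-disconnected_new}; and independence of $Z_t$. Your explicit monotonicity step from $S$ to $E_{t-1}$, the Menger non-adjacency check, and conditioning on the order to fix $W$ spell out what the paper leaves implicit.

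One small correction to your recap of \Cref{lemma:nodes-likely-disconnected_new}: the first edge $e_\tau$ that would connect the pair need not be unblocked, since it can be spanned by $X_{\tau-1}$ together with an earlier rejected edge. This is harmless, because the bound only needs that $e_\tau$ is selected only if $Z_\tau=1$.
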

\begin{proof}
Fix a graphic representation $\phi^*$ of $E_{t}$, and let the endpoints of $e_t$ be $\{u,v\}$ under $\phi^*$. Since $e_t\in \EW$, it is $(\le 2)$-connected w.r.t.~$E_{t-1}$, and thus admits an $e_t$-vertex-cut $W$ with $|W|\le 2$ in $\phi^*(E_{t-1})$.
By \Cref{lemma:blocked_implies_reach_cut}, if $e_t$ is blocked then there exists $(x,w)\in \{u,v\}\times W$ such that $x$ and $w$ are connected in $\phi^*(X_{t-1})$. There are at most $|\{u,v\}\times W|\le 4$ such pairs. By \Cref{lemma:nodes-likely-disconnected_new}, each such connectivity event has probability at most $\beta$. By a union bound,
\[
\Pr[\text{$e_t$ is not blocked}] \;\ge\; 1-4\beta.
\]
Finally, $Z_t$ is independent of the history up to time $t$ and of the event ``$e_t$ is not blocked'', hence
\[
\Pr[\text{$e_t$ is selected}] \;\ge\; \Pr[Z_t=1]\cdot \Pr[\text{$e_t$ is not blocked}] \;\ge\; \beta(1-4\beta).\qedhere
\]
\end{proof}

Finally, because every optimal weakly connected element in the sample is selected with this constant probability, linearity of expectation immediately yields our desired approximation guarantee.

\begin{lemma}\label{lem:alg_2_cond}
Consider a graphic matroid, and let $X$ be the set of elements returned by \Cref{alg:weakly-connected}. For any fixed sample $S$,
$$
\E[w(X) \mid S] \;\ge\; \beta(1-4\beta)\cdot w\!\left(\Opt(E)\cap \EW \cap S\right).
$$
\end{lemma}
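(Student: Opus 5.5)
The plan is to reduce the statement to \Cref{lemma:likely-not-blocked_new} via linearity of expectation. Conditioning on a fixed sample $S$ means conditioning on $m$ and on the set $E_m=S$; the internal arrival order of $S$ remains uniformly random, and the coins $Z_t$ are independent of everything else. We write
\[
\E[w(X)\mid S] \;\ge\; \sum_{e\in \Opt(E)\cap \EW\cap S} w(e)\,\Pr[e\in X\mid S],
\]
dropping the nonnegative contributions of all other selected elements. It then suffices to show that $\Pr[e\in X\mid S]\ge \beta(1-4\beta)$ for each $e\in \Opt(E)\cap\EW\cap S$.

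Fix such an $e$, and condition further on its arrival position $t\le m$ and on the ordered prefix $E_{t-1}\subseteq S\setminus\{e\}$. The hypotheses of \Cref{lemma:blocked_implies_reach_cut} now need to be checked. Since $e\in\EW$, it is not $3$-connected w.r.t.\ $S$ and not parallel to any element of $S$. Connectivity is monotone in the reference set, because circuits in $E_{t-1}\cup\{e\}$ are circuits in $S\cup\{e\}$. Hence $e$ is $(\le2)$-connected w.r.t.\ $E_{t-1}$ and has no parallel element there.

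By \Cref{lemma:vertex_connectivity} together with Menger's theorem, this yields an $e$-vertex-cut $W$ with $|W|\le2$ in $\phi^*(E_{t-1})$ for any fixed representation $\phi^*$. The absence of parallel elements is exactly what rules out a direct $u$--$v$ edge, which no vertex cut could separate. This step is the main thing to verify, and the mild subtlety is that the monotonicity must be applied to $E_{t-1}$ rather than to $S$.

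With these hypotheses in place, the argument of \Cref{lemma:likely-not-blocked_new} applies verbatim conditional on this history. The pairs in $\{u,v\}\times W$ are determined by $E_{t-1}$. Moreover, \Cref{lemma:nodes-likely-disconnected_new} bounds the probability of each connection in $X_{t-1}$ by $\beta$ over the coins $Z_1,\dots,Z_{t-1}$, and its proof works conditionally on any fixed arrival order. A union bound therefore gives that $e$ is unblocked with probability at least $1-4\beta$. The independent coin $Z_t$ then gives $\Pr[e\in X\mid S,t,E_{t-1}]\ge\beta(1-4\beta)$.

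Averaging over $t$ and the prefix yields $\Pr[e\in X\mid S]\ge\beta(1-4\beta)$. Summing over $e\in\Opt(E)\cap\EW\cap S$ completes the proof.
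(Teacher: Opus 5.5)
Your proof is correct and takes the same route as the paper: linearity of expectation over $\Opt(E)\cap\EW\cap S$, combined with the per-element selection bound $\beta(1-4\beta)$ from \Cref{lemma:likely-not-blocked_new}. Your extra checks fill in details the paper leaves implicit: the monotonicity step that transfers $(\le 2)$-connectivity from $S$ down to $E_{t-1}$, the Menger argument that uses the absence of a parallel element to get $|W|\le 2$, and conditioning on the ordered prefix so the pairs in $\{u,v\}\times W$ are fixed before \Cref{lemma:nodes-likely-disconnected_new} is applied.
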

\begin{proof}
Consider any edge $e\in \Opt(E)\cap \EW \cap S$. When it arrives, it is processed by \Cref{alg:weakly-connected}, and (by \Cref{lemma:likely-not-blocked_new}) it is selected with probability at least $\beta(1-4\beta)$. Linearity of expectation yields the claim.
\end{proof}

Note that we have actually shown something stronger: conditioned on $S$, each element $e\in S\cap \EW$ and hence each element $e \in \Opt(E)\cap \EW \cap S$ is selected with constant probability.

\section{Dealing with Highly Connected Elements}\label{sec:highly_connected}

We next present and discuss our algorithm for unknown graphic matroids that obtains a constant fraction of the total weight of the elements in $\Opt(E) \cap (\EH \setminus S)$, i.e., the highly connected elements. 

The matroid $\M$ is graphic, and so is its restriction $\M|S$ to the sample $S$. A graphic representation of $\M|S$ can be constructed using an algorithm of Seymour~\cite{seymour1981recognizing} in polynomial time, given access to an independence oracle for $\M|S$.
We show that this graphic representation can be extended to a graphic representation of $\M|(S\cup A)$ which contains $\EH$.
Given this graphic representation of $\M|(S\cup A)$, we present a black-box reduction to \emph{known} graphic MSP.

\subsection{Constructing a Graphic Representation of \texorpdfstring{$\EH$}{E\_H}} 

Let $\phi_S : S \rightarrow \binom{V}{2}$ be a graphic representation of the matroid $\M|S$ on the sample elements $S$.

We will use $G(\phi_S)=(V, \{(e, u, v) \mid (u,v)=\phi_S(e), e\in S\})$ to denote the graph defined by the representation $\phi_S$.

Let $e\in E\setminus S$ be an element such that there is a set $P\subseteq S$ such that $P\cup \{e\}$ is a circuit in $\M$, and $P$ maps to a path in the graphic representation $\phi_S$. We call such an element \emph{$S$-aligned}.

\begin{lemma}\label{lemma:circuit-path}
Consider an element $e$ and two subsets $P, Q \subseteq S$ such that both $P\cup \{e\}$ and $Q\cup \{e\}$ are circuits in $\M$.
Assume $P$ is a path in the graphic representation $\phi_S$. Then $Q$ is also a path in $\phi_S$, and $P$ and $Q$ are parallel 
(i.e., $P$ and $Q$ connect the same pair of endpoints).
\end{lemma}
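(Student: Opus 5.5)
Let $\{u,v\}$ be the endpoints of the path $P$ in $G(\phi_S)$. I want to show that $Q$ is a $u$--$v$ path. The idea is to express $Q$ through $P$ by circuit operations that involve only sample elements, since these are faithfully represented by $\phi_S$.

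The basic tool is that in a binary (in particular graphic) matroid, the symmetric difference of two circuits is a disjoint union of circuits. Equivalently, it lies in the cycle space. Apply this to $C_1=P\cup\{e\}$ and $C_2=Q\cup\{e\}$. Since $e$ lies in both, $P\Delta Q = C_1\Delta C_2 \subseteq S$ is a disjoint union of circuits of $\M|S$. The restriction $\M|S$ is graphic and $\phi_S$ represents it, so every circuit of $\M|S$ is a cycle in $G(\phi_S)$. Hence $\phi_S(P\Delta Q)$ is an edge-disjoint union of cycles, that is, an even (Eulerian) subgraph of $G(\phi_S)$.

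Next comes a parity argument. In $G(\phi_S)$ the set $P$ has odd degree exactly at $u$ and $v$. We have $Q = P\Delta(P\Delta Q)$, and adding an even subgraph does not change degree parities. So $\phi_S(Q)$ has odd degree exactly at $u$ and $v$, and even degree elsewhere. Therefore $Q$ contains a $u$--$v$ path $Q'$ in $G(\phi_S)$. (If $u=v$, then $P$ would be a cycle and hence dependent, contradicting that $P\subsetneq C_1$ is independent; so $u\neq v$.)

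The remaining step is to show $Q'=Q$, and I expect this to be the main point needing care. Since $Q'$ and $P$ are both $u$--$v$ paths, $P\Delta Q'$ is again an even subgraph, so it is a disjoint union of circuits of $\M|S$. Then $C_1\Delta(P\Delta Q') = Q'\cup\{e\}$ is a symmetric difference of $C_1$ with a disjoint union of circuits, and so lies in the cycle space of $\M$ (binary). It contains $e$, so it is nonempty, and it therefore contains a circuit $C$. That circuit must contain $e$, because $Q'$ is a path and hence independent in $\M$. So $C\setminus\{e\}\subseteq Q'\subseteq Q$, and $C$ is a circuit contained in $Q\cup\{e\}$. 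By minimality of the circuit $Q\cup\{e\}$, we get $C=Q\cup\{e\}$, hence $Q\subseteq Q'$, so $Q=Q'$.

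This shows $Q$ is a path in $\phi_S$ with the same endpoints $u,v$ as $P$, i.e.\ $P$ and $Q$ are parallel. The only external fact I rely on is the binary-matroid property (the cycle space is closed under $\Delta$), which holds for graphic matroids in any representation of $\M$. The key subtlety is that I transfer it to $G(\phi_S)$ only for sets inside $S$, where $\phi_S$ is a valid representation.
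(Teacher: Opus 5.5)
Your proof is correct and follows the paper's argument. The binary circuit characterization makes $P\Delta Q$ an even subgraph of $G(\phi_S)$, and by parity $Q$ has odd degree exactly at the endpoints $u,v$ of $P$.

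The only difference is the last step. The paper notes that $Q$ is independent, hence a forest, and a forest whose only odd-degree vertices are $u$ and $v$ must be a single $u$--$v$ path. You instead extract a $u$--$v$ subpath $Q'\subseteq Q$, apply the cycle-space property a second time, and use minimality of the circuit $Q\cup\{e\}$ to get $Q'=Q$. That route is valid, just slightly longer.
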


\begin{proof}
$\M$ is graphic, and therefore also a binary matroid.
The Circuit Characterization of binary matroids (see, e.g., \cite[Theorem 9.1.2]{oxley2011matroid}) says the symmetric difference of any two circuits is a union of disjoint circuits. Hence,
$P\Delta Q = (P\cup \{e\}) \Delta (Q\cup \{e\})$ is the disjoint union of circuits of $\M$. 
Since $P\Delta Q$ is fully contained in $S$, it must also correspond to a disjoint union of cycles in the graph $\phi_S$. 

$P$ is a path in $G(\phi_S)$. That means the subgraph of $G(\phi_S)$ spanned by $P$ has exactly two vertices of odd degree. Let those vertices be $u, v$.
All vertex degrees in the subgraph spanned by $P\Delta Q$ are even. Hence, $Q=(P\Delta Q)\Delta P$ also has exactly two odd degree vertices, $u, v$.
Since $Q$ is an independent set in $\M$, it is acyclic in the graph $G(\phi_S)$. An acyclic graph with only two odd-degree vertices is necessarily a path. 
Hence, $Q$ is a path connecting $u$ and $v$ in the graph $\phi_S$.
\end{proof}

Let $A$ be the set of all $S$-aligned elements $e\in E\setminus S$.

Consider a mapping $\phi : A \rightarrow \binom{V}{2}$ constructed as follows. 
Let $e\in A$. Since $e$ is $S$-aligned, let $P$ be a path in $\phi_S$ such that $P\cup \{e\}$ is a circuit in $\M$. 
Let $u, v$ be the endpoints (degree-1 vertices) of $P$ in $\phi_S$. Define $\phi(e) = \{u, v\}$.
\Cref{lemma:circuit-path} shows that $\phi(e)$ is well defined, in that all paths that form a circuit with $e$ have the same pair of endpoints.

Let $\M|(S\cup A) = \M\setminus (E\setminus (S\cup A))$ denote the matroid obtained from $\M$ by restricting it to elements of $S\cup A$.

\begin{lemma}\label{lemma:representation-valid}
The mapping $\phi_{S\cup A} = \phi_S \cup \phi$ is a valid graphic representation of the matroid $\M|(S\cup A)$.
\end{lemma}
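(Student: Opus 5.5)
The plan is to show that the two binary matroids on $S\cup A$ coincide: $\M|(S\cup A)$, and the graphic matroid of $G(\phi_{S\cup A})$. I will exhibit a common basis $B$ such that every element has the same fundamental circuit with respect to $B$ in both matroids. A binary matroid is determined by a basis together with its fundamental circuits. It has a standard representation $[I \mid D]$ over $\mathbb{F}_2$, where column $e$ of $D$ is the incidence vector of $C(e,B)\setminus\{e\}$ (see \cite{oxley2011matroid}). Equivalently, its circuits are the minimal nonempty sets in the $\mathbb{F}_2$-span of the fundamental circuits. Since $\M$ is graphic, $\M|(S\cup A)$ is binary, and the graphic matroid of $G(\phi_{S\cup A})$ is binary as well. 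So it suffices to match bases and fundamental circuits.

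\textbf{Step 1 (common basis).} Let $B$ be a basis of $\M|S$. Because $\phi_S$ is a valid representation of $\M|S$, the set $B$ is a maximal forest in $G(\phi_S)$. Every $e\in A$ lies in $\cl(S)$, since $P\cup\{e\}$ is a circuit with $P\subseteq S$; hence $B$ is also a basis of $\M|(S\cup A)$. On the graph side, $\phi(e)=\{u,v\}$ where $u,v$ are joined by the path $P\subseteq S$. So $u,v$ lie in the same component of $G(\phi_S)$, and $B$ remains a maximal forest of $G(\phi_{S\cup A})$. Thus $B$ is a basis in both matroids.

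\textbf{Step 2 (fundamental circuits agree).} For $e\in S\setminus B$ the two fundamental circuits agree, because $\phi_S$ represents $\M|S$. Now take $e\in A$, and let $Q=C(e,B)\setminus\{e\}\subseteq B\subseteq S$ be its fundamental circuit in $\M$ with the element $e$ removed. Since $Q\cup\{e\}$ is a circuit and $e$ is $S$-aligned via $P$, \Cref{lemma:circuit-path} says that $Q$ is a path in $\phi_S$ with the same endpoints $u,v$ as $P$. In the graph $G(\phi_{S\cup A})$, the fundamental cycle of the edge $e=\{u,v\}$ with respect to the forest $B$ is $e$ together with the unique $u$--$v$ path inside $B$. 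That path is $Q$, since $Q\subseteq B$ is a $u$--$v$ path. Hence the fundamental circuits coincide. A degenerate case needs a remark: if $e$ is a loop, then $P=\emptyset$ and $\phi(e)$ is a loop edge, and both fundamental circuits equal $\{e\}$.

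\textbf{Step 3 (conclude).} Both matroids are binary, share the basis $B$, and have identical fundamental circuits. Therefore they have the same standard $\mathbb{F}_2$ representation, so they have the same independent sets. This means $\phi_{S\cup A}$ is a graphic representation of $\M|(S\cup A)$.

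The main obstacle is Step 3: justifying that a binary matroid is determined by one basis and its fundamental circuits. I would cite uniqueness of binary representations from \cite{oxley2011matroid}. As a fallback, I would argue directly that in a binary matroid every circuit is a symmetric difference of fundamental circuits: for a circuit $C$, take $\Delta_{f\in C\setminus B} C(f,B)$; its symmetric difference with $C$ lies in $B$ and is a disjoint union of circuits, so it must be empty. This shows the circuits of both matroids are the minimal nonempty members of the same cycle space.
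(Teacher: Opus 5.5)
Your proposal is correct and is essentially the paper's argument. The paper builds the $\mathbb{F}_2$ vertex-incidence representation $\psi_S$ of $\M|S$ and extends it to $A$ by summing columns along the fundamental circuit $C(e,B)$. It then uses \Cref{lemma:circuit-path} to see that the resulting column is the incidence vector of $\{u,v\}$, which is exactly your claim that the two binary matroids share the basis $B$ and the same fundamental circuits; your fallback argument (every circuit of a binary matroid is a symmetric difference of fundamental circuits) fills in the ``standard argument'' the paper leaves implicit.
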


\begin{proof}
$\phi_S$ is a graphic representation of $\M|S$.
$\M|(S\cup A) = \M\setminus (E\setminus S\setminus A)$ is a matroid extension of $\M|S$.
We wish to show that $\phi_{S\cup A} = \phi_S \cup \phi$ is a graphic representation of $\M|(S\cup A)$.
We proceed in three steps. 

First, we construct a binary matrix representation $\psi_S$ of the matroid $\M|S$ from its graphic representation $\phi_S$. 
A matrix representation $\psi_S : S \rightarrow \mathbb{F}_2^d$ maps each element to a $d$-dimensional binary vector, for some $d$.
We choose $d=|V|$. 
That is, $\psi_S(e)$ will be a vector with a coordinate for each vertex $v\in V$.
For each element $e$, let $\phi_S(e) = \{u, v\}$ for some $u, v\in V$.
$\psi_S(e)(w)=1$ if $w\in\{u, v\}$ and $\psi_S(e)(w)=0$ otherwise.
By a standard argument, $\psi_S$ is a valid binary matrix representation of $\M|S$.

Second, a binary matrix representation $\psi_S$ of a matroid $\M|S$ uniquely extends to any extension of the matroid that lies in its closure.
Let $B$ be some basis of $\M|S$, and let $e\notin S$ be some element such that $e\in \cl(S)$. Let $C \subseteq B \cup \{e\}$ be the fundamental circuit containing $e$. 
For each $e\in A$, define $\psi(e) = \bigoplus_{e'\in C\setminus \{e\}} \psi_S(e')$. (This is possible, as $A$ lies in the closure of $S$.)
By a standard argument, $\psi_{S\cup A}$ is a matrix representation of the binary matroid $\M|(S\cup A)$.

Third, note that for each $e\in A$, its fundamental circuit $C$ (with $e$ removed) is a path in $\phi_S$, due to \Cref{lemma:circuit-path}.
Hence, $\psi_{S\cup A}(e)$ is a vector with exactly two non-zero coordinates. Let's denote the nonzero coordinates $u, v$ and define
$\phi(e) = \{u, v\}$.

Since $\psi_{S\cup A}$  is a valid binary matrix representation of $\M|(S\cup A)$, $\phi_{S\cup A}$ must be a valid graphic representation of the same.
\end{proof}

\paragraph{The $\theta$-structure.}
Finally, we need to argue that $\EH\setminus S$ is contained in $A$.
Consider an element $e\in \EH\setminus S$. \Cref{def:connectivity} says $e$ is 3-connected w.r.t.~$S$ if there are three disjoint sets $P_1, P_2, P_3\subseteq S$ such that $P_i\cup\{e\}$ is a circuit, and $P_i\cup P_j$ is a circuit for $i\ne j$.
The structure $P_1\cup P_2 \cup P_3$ is known as a \emph{$\theta$-structure} or \emph{$\theta$-matroid} in the literature. 

It is well known that every $\theta$-matroid is graphic, and that in any graphic representation, the sets $P_1, P_2, P_3$ are paths, and these paths are internally disjoint and parallel (that is, connect the same pair of endpoints).
For completeness, we provide a proof for the following narrower lemma that is sufficient for our analysis.

\begin{lemma}\label{lemma:theta-matroid-path}
If $e\notin S$ is 3-connected in $S$, then for any graphic representation $\phi_S$ of $\M|S$, 
there is a path $P$ in $\phi_S$ such that $\{e\}\cup P$ is a circuit in $\M$.
\end{lemma}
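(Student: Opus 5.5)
We use the three circuits from \Cref{def:connectivity}: $C_1,C_2,C_3\in\Circ(S\cup\{e\})$ with $C_i\cap C_j=\{e\}$ and $C_i\Delta C_j\in\Circ(S)$. Set $P_i=C_i\setminus\{e\}\subseteq S$. Then the $P_i$ are pairwise disjoint, each $P_i$ is independent (a proper subset of a circuit), and $P_i\cup P_j=C_i\Delta C_j$ is a circuit of $\M|S$ for $i\neq j$. Fix an arbitrary graphic representation $\phi_S$ of $\M|S$. We will show that $P_1$ is a path in $G(\phi_S)$; since $P_1\cup\{e\}=C_1$ is a circuit in $\M$, taking $P=P_1$ proves the lemma.

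\textbf{Step 1: each union is a cycle.} Since $P_i\cup P_j$ is a circuit of $\M|S$, its image in $G(\phi_S)$ is a simple cycle; every vertex it touches has degree exactly $2$ there. Each $P_i$ is a nonempty proper subset of the edges of that cycle, and it is acyclic. Hence $\phi_S(P_i)$ is a disjoint union of paths whose vertices have degree at most $2$. We call the degree-$1$ vertices its ends.

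\textbf{Step 2: the ends agree.} Fix a vertex $x$ in the cycle $P_1\cup P_2$.
\begin{itemize}
\item If $\deg_{P_1}(x)=1$, then $\deg_{P_2}(x)=1$ as well.
\item If $\deg_{P_1}(x)=2$, then $x\notin V(P_2)$.
\end{itemize}
Applying this to all three pairs, a vertex of degree $1$ in $P_i$ has degree $1$ in every $P_j$. Consider an internal vertex of $P_1$, one of degree $2$. It lies on neither $P_2$ nor $P_3$, so the $P_i$ are internally vertex-disjoint. All three therefore share one end set $T$ of even size.

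\textbf{Step 3: each $P_i$ is a single path.} Suppose $P_1$ had $r\ge 2$ path components, so $|T|=2r\ge 4$. Then $P_2$ also has $r$ components with the same ends. This is the step I expect to be the crux. The cycle $P_1\cup P_2$ alternates between $P_1$-components and $P_2$-components, and so induces a perfect matching on $T$ from each side. The combined matching structure must form a single Hamiltonian cycle through all components. The same holds for the pairs $P_1\cup P_3$ and $P_2\cup P_3$.

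Let $M_i$ denote the perfect matching of $T$ given by the components of $P_i$. The requirement is that $M_i\cup M_j$ be a single cycle on $T$ for every pair $i\neq j$. That alone does not immediately contradict $r\ge 2$: three perfect matchings on $4$ points can be pairwise Hamiltonian.

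So I would instead exploit the circuit $C_1=P_1\cup\{e\}$ in $\M$. This needs an argument that pure graph reasoning inside $\phi_S$ cannot give, because $e\notin S$. Use the binary-matroid property from \Cref{lemma:circuit-path}: $C_1\Delta C_2\Delta C_3=P_1\cup P_2\cup P_3\cup\{e\}$ is a disjoint union of circuits in $\M$.

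Since $\M$ is graphic, fix a representation $\phi^*$ of all of $E$. In $\phi^*$, by \Cref{lemma:vertex_connectivity}, the $P_i$ are three internally disjoint single paths between the endpoints of $e$.

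Restricting $\phi^*$ to $S$ gives another representation of the same matroid $\M|S$ restricted to $P_1\cup P_2\cup P_3$. Both representations therefore have the same circuits on $T':=P_1\cup P_2\cup P_3$. In $\phi^*$ the only circuits in $T'$ are the three sets $P_i\cup P_j$, because a theta graph has exactly three cycles.

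Now count circuits in $\phi_S$ when $r\ge2$. The union of the three matchings, with each matching edge replaced by its component, gives a cubic graph on $T$ with $3r$ edges. Since $|T|=2r\ge 4$, its cycle space has dimension $3r-2r+1=r+1\ge 3$. Its cycle space has $2^{r+1}-1\ge 7$ nonzero elements, and some Eulerian subgraph outside the three unions $P_i\cup P_j$ contains a cycle. That cycle yields a circuit that is a proper mix of components of the $P_i$, and no such circuit exists in $\phi^*$. This is a contradiction, so $r=1$.

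Hence $P_1$ is a path in $\phi_S$, and with $P=P_1$ the lemma follows.
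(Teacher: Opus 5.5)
Your proof is correct and is essentially the paper's argument. Each $P_i$ splits into path segments with a common end set. A global representation $\phi^*$, in which $P_1\cup P_2\cup P_3$ is a $\theta$-graph, shows that the only circuits inside $P_1\cup P_2\cup P_3$ are the three sets $P_i\cup P_j$. Having two or more segments per $P_i$ would then force an additional cycle in $\phi_S$.

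The differences are small. You produce the extra cycle from the cycle-space dimension $r+1\ge 3$ of the cubic segment graph, rather than from the paper's explicit alternating walk. To close that step cleanly, note that a cycle inside an Eulerian $H\neq P_i\cup P_j$ cannot equal some $P_i\cup P_j$: otherwise $H\setminus(P_i\cup P_j)$ would be a nonempty even-degree subset of the matching given by $P_k$. Your $\phi^*$-based proof of the no-extra-circuits claim is more careful than the paper's, whose proof of that claim assumes the conclusion. The $C_1\Delta C_2\Delta C_3$ remark is never used and can be dropped.
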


Our proof of the lemma makes use of the following claim.

\begin{claim}\label{claim:theta-matroid-noextracircuits}
Let $P_1, P_2, P_3$ be three disjoint subsets of $S$ such that $P_i\cup\{e\}$ is a circuit, and $P_i\cup P_j$ is a circuit for $1\le i <j \le 3$.
Then, $\M|(P_1\cup P_2\cup P_3)$ contains no circuits besides $P_i\cup P_j$.
\end{claim}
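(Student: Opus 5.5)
The plan is to show that any circuit $C \subseteq P_1\cup P_2\cup P_3$ must coincide with some $P_i\cup P_j$. The argument uses only two facts: circuit minimality and the binary property that the symmetric difference of circuits is a disjoint union of circuits (\cite[Theorem 9.1.2]{oxley2011matroid}), which was already used in \Cref{lemma:circuit-path}.

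The first step is to show that each $P_i$ is independent and, more strongly, that any $C$ meeting $P_i$ must contain all of $P_i$. Independence of $P_i$ is immediate, since $P_i$ is a proper subset of the circuit $P_i\cup\{e\}$. For the stronger claim, suppose $C$ meets $P_1$ but does not contain it. If $C \subseteq P_1$, this contradicts independence of $P_1$, so $C$ also meets some other $P_j$. I would then consider $D = C \Delta (P_1\cup P_j)$, which is a nonempty disjoint union of circuits because $C \neq P_1\cup P_j$. To reach a contradiction, I would work in a fixed graphic representation of $\M|(P_1\cup P_2\cup P_3\cup\{e\})$, which exists because $\M$ is graphic. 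There, $P_i\cup\{e\}$ are cycles through $e=(u,v)$, so each $P_i$ is a $u$--$v$ path. By \Cref{lemma:vertex_connectivity} and its proof, the paths are internally vertex-disjoint, since $P_i\cup P_j$ is a circuit. The union of three internally disjoint $u$--$v$ paths is a theta graph, whose only cycles are the three unions of pairs of paths. The claim follows directly from this.

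I would therefore give the proof in this graph-based form. First, fix a graphic representation of the restriction to $P_1\cup P_2\cup P_3\cup\{e\}$. Second, deduce that each $P_i$ is a $u$--$v$ path, where $\{u,v\}$ are the endpoints of $e$. Third, use \Cref{lemma:vertex_connectivity}, in its converse direction (a shared internal vertex makes $P_i\Delta P_j$ non-minimal), to get internal disjointness. Fourth, observe that in the resulting theta graph every vertex other than $u,v$ has degree $2$. Hence a cycle entering the interior of some path $P_i$ must traverse all of $P_i$, so it is a union of whole paths. A union of whole paths is a cycle only if it uses exactly two of them: one path alone is acyclic, and all three give $u,v$ degree $3$.

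The step needing the most care is justifying that circuits of $\M$ restricted to this set are exactly the cycles of the representation. This holds by the definition of a graphic representation. A second point of care is the degree-$2$ argument, specifically that a cycle cannot partially use a path. This holds because an internal vertex with degree $2$ in the cycle forces both incident path edges into the cycle, and propagating this along the path covers all of $P_i$.
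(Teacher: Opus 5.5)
Your graph-based argument is correct, and it ends the same way as the paper's proof: the union of $P_1,P_2,P_3$ is a theta graph whose internal vertices have degree $2$, so every cycle is a union of exactly two whole paths.

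The difference is in how you obtain the theta structure, and it matters. The paper takes an arbitrary graphic representation of $\M|(P_1\cup P_2\cup P_3)$. It asserts that the three circuits $P_i\cup P_j$ by themselves force the $P_i$ to be internally disjoint parallel paths. You instead fix a representation of $\M|(P_1\cup P_2\cup P_3\cup\{e\})$. You read off that each $P_i$ is a $u$--$v$ path from the cycle $P_i\cup\{e\}$, and only then use the circuits $P_i\cup P_j$ for internal disjointness.

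Your route is the sound one, because the pairwise circuits alone do not suffice. Take $K_4$ with $P_1,P_2,P_3$ its three perfect matchings. Each union of two is a $4$-cycle, yet no $P_i$ is a path, and the triangles are additional circuits. The paper's sentence about arbitrary representations is essentially the conclusion of \Cref{lemma:theta-matroid-path}, which this claim is meant to support, so as written it is circular. Your argument uses the hypothesis on $e$ exactly where it is needed. It also uses the fact that a statement about the matroid only has to be checked in one convenient representation.

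One cleanup: delete the opening thread with $D=C\Delta(P_1\cup P_j)$, along with the remark that the proof uses only circuit minimality and the binary property. That thread is never completed and your final proof does not use it, so the write-up should consist of the graph-based argument alone.
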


\begin{proof}
Since $\M$ is graphic, the restriction $\M|(P_1\cup P_2\cup P_3)$ is also graphic. In any graphic representation of this restriction, the circuits $P_1\cup P_2$, $P_2\cup P_3$, and $P_1\cup P_3$ imply that $P_1$, $P_2$, and $P_3$ are internally disjoint paths connecting the same pair of endpoints. The union of these paths forms a $\theta$-graph, in which the only cycles are the unions of two paths $P_i\cup P_j$. Since circuits in a graphic matroid correspond to simple cycles in its graphic representation, there are no other circuits in the restriction. 
\end{proof}

\begin{proof}[Proof of \Cref{lemma:theta-matroid-path}]
Consider some graphic representation $\phi_S$ of a matroid $\M|S$ where $P_1 \cup P_2 \cup P_3 \subseteq S$.
Since $P_i\cup P_j$ is a circuit, it must correspond to a simple cycle in the graph $\phi_S$. 

Observe that since $P_1$ is a subgraph of the simple cycle $P_1\cup P_2$, it must be either a path, or a collection of path segments in $\phi_S$. 
Let $V'$ be the set of endpoints of these path segments. (Each vertex $v\in V'$ is incident to exactly one path segment: if multiple path segments were incident to $v$, they would cease to be path segments).
By the same token,  $P_2$ must also be a collection of path segments, with endpoints being exactly the vertices of $V'$.
Applying the same reasoning to the cycle $P_1\cup P_3$ yields that each of $P_1, P_2, P_3$ consists of $|V'|/2$ path segments, with each path segment being incident to a pair of vertices in $V'$, and each vertex $v\in V'$ being incident to exactly one path segment from each $P_i$.

If $|V'| = 2$, then $P_1$ is a path, so \Cref{lemma:theta-matroid-path} holds.
Now consider $|V'| \ge 4$. Since each $P_i$ contains $|V'|/2$ path segments, there are $\frac{3}{2}|V'|$ path segments total.
Since there are more path segments than vertices, there must be a simple cycle that doesn't visit all path segments. 

Consider the following procedure. Start at an arbitrary vertex $v_0\in V'$. In each step $i$, pick the unique path segment from $P_{(i \bmod 3) + 1}$ that is incident to $v_i$, and denote the other endpoint of the path segment $v_{i+1}$. Stop when you arrive at a vertex for the second time. 

This procedure yields a closed walk that excludes at least one path segment from at least two of the sets $P_1, P_2, P_3$. 
This closed walk must contain a simple cycle $C$ that does not fully contain (and hence is not equal to) any of $(P_1\cup P_2)$, $(P_2\cup P_3)$ or $(P_3\cup P_1)$. 
This contradicts \Cref{claim:theta-matroid-noextracircuits}.
\end{proof}

\begin{algorithm}[H]
    \begin{algorithmic}[1]
        \Require Independence oracle, $n$ (total number of elements), stream of elements of $\M$
        \Ensure Vertex set $V$. Stream of $n-m$ elements, each either labeled ``dummy'', or annotated with a vertex pair.
            \State Sample $m\sim \operatorname{Bin}\left(n, \gamma \right)$.
            \State Add first $m$ arrived edges into sample $S$.
            \State Build a graph representation $\phi_S$ over $S$ using Seymour's algorithm. 
            \State Pick a basis $B$ of the matroid $\M|S$ (which is a spanning forest in $G(\phi_S)$).
            \State Emit $V$, the vertex set of the graphic representation $\phi_S$.
            \For{each arriving edge $e$}
                \State If $e\notin \cl(S)$, emit ($e$, ``dummy'') and continue
                \State Else, let $C=C(e, B)$ be the fundamental circuit of $e$, and let $P=C\setminus \{e\}$
                \State If $P$ is not a path in $G(\phi_S)$, emit ($e$, ``dummy'') and continue
                \State Else, let $u,v$ be the endpoints of $P$. Emit ($e, \{u, v\}$).
            \EndFor
    \end{algorithmic}
    \caption{Algorithm for Highly Connected Edges}
    \label{alg:highly-connected-alt}
\end{algorithm}

\begin{theorem}\label{thm:representation_construction} 
\Cref{alg:highly-connected-alt} produces a stream of $n-m$ elements of $E\setminus S$ such that:
\begin{itemize}
    \item Vertex pair annotations on emitted elements not marked ``dummy'' produce a valid graphic representation of the matroid consisting of non-dummy emitted elements.
    \item All elements of $\EH\setminus S$ are emitted as non-dummy.
\end{itemize}
\end{theorem}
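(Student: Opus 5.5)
The plan is to show that the set of non-dummy emitted elements is exactly $A$, the set of $S$-aligned elements, and then to invoke the lemmas already established. Throughout, $m$, $S$, $\phi_S$ and $B$ are fixed by the first steps of \Cref{alg:highly-connected-alt}. The algorithm emits one item per arriving element after time $m$, so exactly $n-m$ items come out, one for each element of $E\setminus S$.

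For the first bullet, I first show that an element $e$ is emitted as non-dummy iff $e\in A$. If the algorithm emits $(e,\{u,v\})$, then $P=C(e,B)\setminus\{e\}\subseteq B\subseteq S$ is a path in $G(\phi_S)$ and $P\cup\{e\}$ is a circuit, so $e$ is $S$-aligned by definition. Conversely, suppose $e\in A$, witnessed by a path $Q\subseteq S$ with $Q\cup\{e\}$ a circuit. Then $e\in\cl(S)$, so the fundamental circuit $C(e,B)$ exists and $C(e,B)\setminus\{e\}\subseteq S$. By \Cref{lemma:circuit-path}, applied to $Q$ and $C(e,B)\setminus\{e\}$, the set $C(e,B)\setminus\{e\}$ is also a path with the same endpoints as $Q$. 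Hence the algorithm emits $e$ with that endpoint pair, and this pair agrees with $\phi(e)$ as defined before \Cref{lemma:representation-valid}.

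With this in hand, \Cref{lemma:representation-valid} says that $\phi_S\cup\phi$ is a valid graphic representation of $\M|(S\cup A)$. Restricting a graphic representation to a subset $A$ keeps it valid, because independence in $\M|A$ is independence in $\M$, and that holds exactly when the image under $\phi$ is acyclic. Therefore the annotations restricted to $A$ form a valid graphic representation of $\M|A$, the matroid on the non-dummy emitted elements. The vertex set $V$ emitted at the start is the one these annotations refer to.

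For the second bullet, take $e\in \EH\setminus S$ and split into two cases.
\begin{itemize}
\item If $e$ is 3-connected with respect to $S$, then \Cref{lemma:theta-matroid-path} gives a path $P$ in $\phi_S$ with $P\cup\{e\}$ a circuit. So $e\in A$, and by the equivalence above it is emitted non-dummy.
\item If $e$ is parallel to some $e'\in S$, then $\{e,e'\}$ is a circuit and $P=\{e'\}$ is a single-edge path. A loop cannot occur here, since parallel elements have rank one. So again $e\in A$.
\end{itemize}

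The only delicate step is the reverse direction of the equivalence. The algorithm tests only the single fundamental circuit $C(e,B)$, not every candidate path. \Cref{lemma:circuit-path} is exactly what makes this single test sufficient.
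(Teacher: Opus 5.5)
Your proposal is correct and follows the paper's proof: the first bullet comes from \Cref{lemma:representation-valid}, and the second from \Cref{lemma:theta-matroid-path} together with the parallel-element case. You also make explicit, via \Cref{lemma:circuit-path}, that the non-dummy emitted elements are exactly the $S$-aligned set $A$, and that restricting the representation of $\M|(S\cup A)$ to $A$ stays valid; the paper leaves both points implicit.
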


\begin{proof}
The first property is a direct consequence of \Cref{lemma:representation-valid}.
For the second property, consider an $e\in \EH\setminus S$. If $e$ is $3$-connected in $S$, we appeal to \Cref{lemma:theta-matroid-path}. Otherwise, there must be an $e'\in S$ that is parallel to $e$. But this means that $\{e, e'\}$ is a circuit and $e'$ is a path in $S$, and thus $e\in A$ is not dummy.
\end{proof}

\subsection{Reduction to Graphic MSP with Known Graph}
The output of \Cref{alg:highly-connected-alt} is a vertex set, followed by a sequence of edges (each annotated with the vertex pair incident to it, and its weight), that must be immediately accepted or rejected. 

This is almost exactly the input to the graphic matroid secretary problem where the graph is known, with one exception: \Cref{alg:highly-connected-alt} may emit dummy edges. 
To be able to use existing algorithms for graphic MSP (e.g., \cite{KorulaP09}), we use the following reduction.

\begin{algorithm}[H]
    \begin{algorithmic}[1]
        \Require Vertex set $V$.
        Stream of $n-m$ elements, each either labeled ``dummy'', or annotated with a pair of vertices from $V$.
        \Ensure Vertex set $V'$. A stream of $n-m$ edges, each annotated with a vertex pair and weight.
        \State Let $V''$ be a set of dummy vertices of sufficient size so that ${|V''| \choose 2} \ge n-m$.
        \State Let $\pi$ be a random permutation of the elements of ${V'' \choose 2}$.
        \State Emit $V'=V\cup V''$ as the new vertex set
        \For{each arriving element $e$}
            \State If $e$ is dummy, emit $e$, the next available vertex pair from $\pi$, and weight $0$
            \State If $e$ is not dummy, emit $e, u, v$, and the original weight $w(e)$.
        \EndFor
    \end{algorithmic}
    \caption{Handling dummy elements}
    \label{alg:highly-connected-reduction}
\end{algorithm}

\begin{lemma}\label{lem:dummy_reduction}
Suppose \Cref{alg:highly-connected-reduction} gets as its input a random permutation of the union of the edges of some graph $G=(V, E)$ and a set of dummy elements $E''$.
Then, its output is a random permutation of edges of a (random) graph $G'=(V', E')$ such that $G'$ is a supergraph of $G$, and $E'\setminus E$ consists of edges of weight $0$.
\end{lemma}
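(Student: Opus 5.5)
We prove the lemma by a direct construction: define $G'$ explicitly, then show the output order is uniformly random over its edges. Let $E''$ be the set of dummy elements, with $|E''|=k\le n-m$. We define $G'$ by letting its edge set $E'$ consist of $E$ together with one new edge for each $e''\in E''$. That new edge carries weight $0$, the label $e''$, and the vertex pair from ${V''\choose 2}$ that \Cref{alg:highly-connected-reduction} assigns to $e''$. Non-dummy elements keep their pairs in $V\subseteq V'$ and their original weights. Hence $G'$ contains $G$ as a subgraph on the vertex set $V'=V\cup V''$, and every edge of $E'\setminus E$ has weight $0$. This gives the structural part of the statement, with $G'$ random only through $\pi$ and the arrival order.

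Two details about well-definedness need to be checked. First, since $\binom{|V''|}{2}\ge n-m\ge k$, the permutation $\pi$ never runs out of pairs. Second, dummy pairs are pairwise distinct and lie entirely in $V''$, which is disjoint from $V$, so the dummy edges form a simple graph on $V''$ that is vertex-disjoint from $G$. It follows that the matroid of $G'$ is the direct sum of the graphic matroid of $G$ and a forest-free-of-interaction part on the dummy edges. This is not needed for the lemma itself, but it confirms that the weight-$0$ edges do not change the independence structure of $G$.

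The main point is that the output sequence is a uniformly random permutation of $E'$. The only subtlety is that the label of a dummy edge depends on its position in the stream, namely which element of $\pi$ was next. We handle this by conditioning on the unordered set of labeled edges. The arrival order $\sigma$ is uniform over orderings of $E\cup E''$, and $\pi$ is uniform and independent of $\sigma$. The $j$-th dummy to arrive receives $\pi(j)$. Fix a final labeled graph $G'$, that is, an assignment of distinct pairs $p_{e''}$ to the dummies, and fix an ordering $\tau$ of its edges. The outcome equals $(G',\tau)$ exactly when $\sigma$ is the ordering induced by $\tau$ and the first $k$ entries of $\pi$ are the pairs $p_{e''}$ listed in $\tau$'s dummy order. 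The probability of this event is
\[
\frac{1}{(|E|+k)!}\cdot\frac{\bigl(\binom{|V''|}{2}-k\bigr)!}{\binom{|V''|}{2}!},
\]
which does not depend on $\tau$. So, conditioned on $G'$, the order is uniform over the $(|E|+k)!$ orderings of $E'$.

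This conditioning step is the only place that needs care, because a naive argument could worry that positional labeling correlates the graph with the order. The explicit count above shows that the joint law factorizes as a law over $G'$ times a uniform order.
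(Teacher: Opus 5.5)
Your proof is correct and takes essentially the same approach as the paper: the structural claims hold by construction, and the order is shown to be uniform by conditioning on the realized dummy vertex pairs. The paper conditions on the unordered set of pairs and appeals to the prefix of $\pi$ being uniformly ordered; your exact product formula conditions on the labeled assignment instead, and also makes explicit the interleaving of real and dummy edges, which the paper's argument leaves implicit.
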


\begin{proof}
By construction, $G'$ is a supergraph of $G$, and edges in $E'\setminus E = E''$ have weight $0$.
What's left is to show that the edges of $G'$ arrive according to a uniformly random permutation. 

Consider a run of \Cref{alg:highly-connected-reduction} that resulted in emitting some graph $G'$.
By construction, a random subset (determined by $\pi$) of vertex pairs from ${V''\choose 2}$ of size $|E''|$ got included in $G'$.
Conditioned on the set of vertex pairs selected, the order in which the pairs show in the output is uniformly random, by virtue of the corresponding prefix of $\pi$ being a uniformly random permutation of $|E''|$ elements.
\end{proof}

\begin{lemma}\label{lemma:alg1_analysis_cond_alt} 
Consider a graphic matroid, and let $X$ be the set of elements returned by \Cref{alg:highly-connected-alt}. Conditioned on the sample $S$, 
\[
\E[w(X) \mid S] \;\geq\; \frac{1}{4} \cdot w\!\left(\Opt(E) \cap (\EH \setminus S)\right).
\]
\end{lemma}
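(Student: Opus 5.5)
The plan is to combine \Cref{alg:highly-connected-alt} and \Cref{alg:highly-connected-reduction} with a known-graph graphic MSP algorithm run as a black box. We use one that is \emph{probability-competitive}: each edge of the optimal forest of the input graph is selected with probability at least $1/4$ under uniformly random arrival order. The strong-competitiveness result of Soto et al.\ \cite{soto2021strong} gives exactly this factor $4$. The set $X$ referred to in the statement is the set of elements this composed pipeline selects among the non-sample elements.

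\textbf{Step 1: what the black box sees.} Fix the sample $S$, hence $m$, $\phi_S$ and $B$. The elements of $E\setminus S$ arrive in a uniformly random order, so the stream fed to \Cref{alg:highly-connected-reduction} is a random permutation of the edge set $A$ together with the dummy elements. By \Cref{thm:representation_construction}, the non-dummy annotations form a valid graphic representation of $\M|A$. Note that the annotation of an element depends only on $S$ and $B$, not on the arrival order, so the graph is fixed given $S$. By \Cref{lem:dummy_reduction}, the output of \Cref{alg:highly-connected-reduction} is a uniformly random permutation of the edges of a graph $G'$. This graph consists of the representation of $A$ plus weight-$0$ dummy edges on the fresh vertices $V''$. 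The black box therefore faces a legitimate known-graph instance in random order. Since its selected set is a forest in $G'$, restricting to non-dummy edges gives a forest in the representation of $A$, hence a set independent in $\M$. Feasibility follows.

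\textbf{Step 2: optimal elements survive as optimal edges.} Take $e\in \Opt(E)\cap(\EH\setminus S)$. By \Cref{thm:representation_construction}, $e\in A$. We must show $e$ belongs to the optimal forest of $G'$. The dummy edges lie on vertex set $V''$, disjoint from $V$, and have weight $0$, so they do not interact with the edges of $A$. The optimum of $G'$ is thus $\Opt(A)$ together with some dummy forest. By the greedy characterization of matroid optima (with the paper's consistent lexicographic tie-breaking), an element of $\Opt(E)$ lying in $A\subseteq E$ belongs to $\Opt(A)$. Indeed, $e\in\Opt(E)$ means $e$ is not spanned by the heavier elements of $E$, hence not by the heavier elements of $A$.

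\textbf{Step 3: conclude.} Probability-competitiveness gives $\Pr[e\in X\mid S]\ge 1/4$ for each such $e$. Summing weights by linearity of expectation yields the bound.

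The main obstacle I expect is Step 1's justification that the random-order assumption and the graph are properly decoupled. Specifically, we need the set of dummy/non-dummy labels and the annotations to be determined by $S$ alone, with only the order being random, so that the black box's guarantee applies conditionally on $S$. A second subtlety is confirming that \cite{soto2021strong} (or \cite{KorulaP09}, whose factor is worse) is stated with a per-element $1/4$ probability guarantee. If only an expected-weight guarantee were available, Step 2 would instead give $\E[w(X)\mid S]\ge \frac14 w(\Opt(A))\ge\frac14 w(\Opt(E)\cap A)$. That bound still suffices, since $\Opt(E)\cap A$ is independent and contains $\Opt(E)\cap(\EH\setminus S)$.
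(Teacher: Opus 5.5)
Your proposal is correct and follows the paper's proof. You compose the pipeline with a probability-competitive known-graph algorithm, use the greedy characterization to get $\Opt(E)\cap A\subseteq \Opt(A)=\Opt(G')\cap A$ (the dummy edges have weight $0$ and sit on disjoint vertices), and conclude by linearity of expectation. The differences are only cosmetic: you invoke Soto et al.'s $4$-probability-competitive algorithm rather than Banihashem et al.'s $3.95$ one, and you spell out the conditioning on $S$, feasibility, and an expected-weight fallback more explicitly than the paper does.
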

\begin{proof}
We establish the bound by proving a stronger claim: each individual element $e\in \Opt(E) \cap (\EH \setminus S)$ is selected by the algorithm with probability at least $1/4$. 

The algorithm of \cite{BanihashemHKKMO25} is guaranteed to select each edge $e\in \Opt(G_\pi)$ with probability at least $\frac{1}{3.95} \geq \frac{1}{4}$. By the optimality of the greedy algorithm on matroids, since $A \subseteq E$, any element $e\in \Opt(E) \cap A$ must also belong to $\Opt(A)$. By construction, the real edges of $G_\pi$ are exactly $A$, and the dummy edges have weight $0$, so $\Opt(G_\pi) \cap A = \Opt(A)$. Since $\EH \setminus S \subseteq A$, any element $e \in \Opt(E) \cap (\EH \setminus S)$ is also in $\Opt(E) \cap A$, which implies $e \in \Opt(A) \subseteq \Opt(G_\pi)$. Therefore, \mbox{$\Pr[e \in X \mid S] \geq 1/4$}. Applying linearity of expectation over all elements in \mbox{$\Opt(E) \cap (\EH \setminus S)$} yields the claim.
\end{proof}

If the instance $\M$ is known to be simple (i.e. not contain parallel elements/edges), 
a subroutine that takes advantage of this fact can be used instead, to obtain a tighter competitive guarantee.  
B\'erczi et al.~\cite{10.1007/978-3-031-93112-3_10} give an algorithm for the known-representation graphic MSP that is $0.2693$ probability-competitive.

\section{Putting It All Together and Poly-Time Implementation}

We now combine the two algorithms to obtain the main result of the paper for graphic matroids. Let $F^*=\Opt(E)$ be an optimal solution. We define two random subsets of $F^*$ based on the sample $S$:
\[ 
F_H^* \;=\; F^* \cap (\EH\setminus S),
\qquad
F_W^* \;=\; F^* \cap (\EW \cap S).
\]
\Cref{lemma:alg1_analysis_cond_alt,lem:alg_2_cond} state that, conditioned on $S$, \Cref{alg:highly-connected-alt} obtains a $\rho_H=\frac{1}{4}$-approximation to $w(F_H^*)$ and \Cref{alg:weakly-connected} obtains a $\rho_W=\beta(1-4\beta)$-approximation to $w(F_W^*)$.

\begin{theorem}\label{thm:main}
Randomizing between \Cref{alg:weakly-connected} and \Cref{alg:highly-connected-alt} with probabilities $\gamma$ and $1-\gamma$, respectively, and choosing $\beta=1/8$ and $\gamma=2/3$ yields a polynomial-time randomized algorithm for the MSP on unknown graphic matroids that is $36$-competitive.
\end{theorem}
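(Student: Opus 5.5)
The plan is to prove the stronger probability-competitive statement: every $e\in F^*=\Opt(E)$ is selected with probability at least $1/36$. Linearity of expectation then gives $\E[w(I)]\ge w(F^*)/36$. The combined algorithm draws a single coin $Y\sim\operatorname{Ber}(\gamma)$. It draws a single sample $m\sim\operatorname{Bin}(n,\gamma)$, so that $S=E_m$, and this $S$ is shared by both subroutines. If $Y=1$ it runs \Cref{alg:weakly-connected}; otherwise it runs \Cref{alg:highly-connected-alt} composed with \Cref{alg:highly-connected-reduction} and the known-graph algorithm of \cite{BanihashemHKKMO25}. Because $Y$ is independent of the arrival order and of $S$, the conditional guarantees of \Cref{lem:alg_2_cond} and \Cref{lemma:alg1_analysis_cond_alt} still hold after conditioning on $Y$. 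Those guarantees are in fact per-element: conditioned on $S$, each $e\in F^*\cap\EW\cap S$ is picked with probability at least $\beta(1-4\beta)$, and each $e\in F^*\cap(\EH\setminus S)$ with probability at least $1/4$.

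The key step, and the one needing care, is decoupling the event $e\in S$ from the classification of $e$. Under random order with $m\sim\operatorname{Bin}(n,\gamma)$, the indicators $\mathbbm{1}[e'\in S]$ are i.i.d.\ $\operatorname{Ber}(\gamma)$. Whether $e\in\EH$ depends only on $S\setminus\{e\}$, for two reasons:
\begin{itemize}
\item By the remark after \Cref{def:connectivity}, $k$-connectivity of $e$ w.r.t.\ $S$ equals $k$-connectivity w.r.t.\ $S\setminus\{e\}$.
\item Being parallel to some $e'\in S$ involves only $e'\neq e$.
\end{itemize}
Hence, conditioned on $S\setminus\{e\}$, the classification of $e$ is fixed and $\Pr[e\in S\mid S\setminus\{e\}]=\gamma$. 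One must also check that \Cref{lemma:likely-not-blocked_new} applies to $e=e_t\in S$ when its hypothesis is stated w.r.t.\ $E_{t-1}\subseteq S$ rather than $S$. This holds because connectivity is monotone in the set: $(\le2)$-connectivity w.r.t.\ $S$ implies it w.r.t.\ $E_{t-1}$, and being non-parallel to every element of $S$ implies being non-parallel to every element of $E_{t-1}$. So $e\in\EW\cap S$ gives exactly the hypothesis that lemma uses.

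With this decoupling, I condition on $S\setminus\{e\}$ and split into the two cases.
\begin{itemize}
\item If $e\in\EW$:
\[\Pr[e\text{ selected}]\ge \Pr[Y=1]\cdot\Pr[e\in S]\cdot\beta(1-4\beta)=\gamma^2\beta(1-4\beta)=\tfrac49\cdot\tfrac18\cdot\tfrac12=\tfrac1{36}.\]
\item If $e\in\EH$:
\[\Pr[e\text{ selected}]\ge(1-\gamma)\cdot(1-\gamma)\cdot\tfrac14=\tfrac19\cdot\tfrac14=\tfrac1{36}.\]
\end{itemize}
Averaging over $S\setminus\{e\}$ gives the bound for every $e\in F^*$. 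The choice $\beta=1/8$ maximizes $\beta(1-4\beta)$, and $\gamma=2/3$ balances $\gamma^2/16$ against $(1-\gamma)^2/4$.

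Finally, I would verify polynomial time. The blocked test in \Cref{def:blocked_new} needs $O(t)$ independence queries at step $t$. Seymour's algorithm builds $\phi_S$ in polynomial time. Fundamental circuits and the path test in \Cref{alg:highly-connected-alt} are polynomial, as are \Cref{alg:highly-connected-reduction} and the known-graph algorithm. The algorithm never needs to decide 3-connectivity itself, since \Cref{thm:representation_construction} guarantees $\EH\setminus S$ is emitted as non-dummy regardless. Feasibility follows because \Cref{alg:weakly-connected} checks independence, and \Cref{lemma:representation-valid} ensures forests in the emitted graph are independent in $\M$; the dummy edges live on fresh vertices $V''$ and so do not interact with the real edges.
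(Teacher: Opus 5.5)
Your competitive-ratio argument is correct and is essentially the paper's proof. The paper sets $q_e=\Pr[e\in\EW]$ and uses that $\{e\in S\}$ is independent of $\{e\in\EW\}$, because $\EW$ is determined by $S\setminus\{e\}$. It then bounds the per-element coefficient $\rho_H(1-\gamma)^2(1-q_e)+\rho_W\gamma^2q_e$ below by $\min\{\rho_H(1-\gamma)^2,\rho_W\gamma^2\}=1/36$. That is exactly your computation done pointwise after conditioning on $S\setminus\{e\}$. Your explicit checks that $e'\neq e$ in the parallel clause, and that $(\le 2)$-connectivity w.r.t.\ $S$ passes to $E_{t-1}$, are details the paper uses only implicitly.

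One claim in your feasibility paragraph is wrong. The dummy edges are disjoint from the real edges only in the \emph{emitted} graph, not in $\M$, so a forest of the emitted graph containing dummies need not be independent. For example, two elements outside $\cl(S)$ that are parallel in $\M$ are both emitted as dummies on distinct pairs from $V''$, so the known-graph subroutine may select both. Similarly, a non-aligned $g\in\cl(S)$ can close a circuit with $S$-aligned edges. The fix is immediate: the combined algorithm should accept only the non-dummy elements that the subroutine selects. This costs nothing, since dummies carry weight $0$. The accepted set is then a forest in $G(\phi_{S\cup A})$, hence independent in $\M$ by \Cref{lemma:representation-valid}.
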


\begin{proof}
Fix an element $e\in F^*$. Let $q_e := \Pr[e\in \EW]$ (with respect to the randomness of the sample $S$). Note that the event $e\in S$ is independent of the event $e\in \EW$ because $\EW$ is determined entirely by $S\setminus\{e\}$. Therefore,
\begin{align*}
&\Pr[e\in F_W^*] = \Pr[e\in S]\Pr[e\in \EW] = \gamma \cdot q_e,
\qquad\qquad\qquad\text{and}\\
&\Pr[e\in F_H^*] = \Pr[e\notin S]\Pr[e\in \EH] = (1-\gamma) \cdot (1-q_e).
\end{align*}

By \Cref{lemma:alg1_analysis_cond_alt,lem:alg_2_cond} and linearity of expectation,
\begin{align*}
\E[w(\Alg)] 
&\ge (1-\gamma)\cdot \rho_H\cdot \E[w(F_H^*)] \;+\; \gamma\cdot \rho_W\cdot \E[w(F_W^*)] \\
&= \sum_{e\in F^*} w(e)\Big(\rho_H(1-\gamma)^2(1-q_e) + \rho_W\gamma^2 q_e\Big)\\
&\ge \min\{\rho_H(1-\gamma)^2,\, \rho_W\gamma^2\}\cdot \sum_{e\in F^*} w(e)\\
&= \min\{\rho_H(1-\gamma)^2,\, \rho_W\gamma^2\}\cdot w(F^*).
\end{align*}
With $\rho_H=\frac{1}{4}$, $\beta=\frac{1}{8}$ (so $\rho_W=\beta(1-4\beta)=\frac{1}{16}$), and $\gamma=\frac{2}{3}$, we have $\rho_H(1-\gamma)^2 = \frac{1}{4}(\frac{1}{9}) = \frac{1}{36}$ and $\rho_W\gamma^2 = \frac{1}{16}(\frac{4}{9}) = \frac{1}{36}$. The minimum above evaluates to exactly $\frac{1}{36}$. Hence, $\E[w(\Alg)]\ge \frac{1}{36} \cdot w(\Opt(E))$.
\end{proof}

We remark that while \Cref{thm:main} is stated in terms of expected weight, our analysis proves a stronger probabilistic claim: every element in the optimal solution is chosen by the algorithm with probability at least $1/36$.

\paragraph{Polynomial-Time Implementation.} 
We conclude by arguing that both \Cref{alg:weakly-connected} and \Cref{alg:highly-connected-alt} can be implemented using a polynomial number of independence oracle queries and polynomial computation time. Our primary tool is the result by Seymour~\cite{seymour1981recognizing}, which provides an algorithm to construct a graphic representation of a graphic matroid using polynomially many independence oracle queries.

First, consider \Cref{alg:weakly-connected}, which processes the first $m$ arriving elements $e_1, \dots, e_m$. At each stage $t$, the algorithm must check two conditions for $e_t$:
\begin{enumerate}
\item $X_{t-1} \cup \{e_t\} \in \I$: This requires a single independence oracle query.
\item $e_t$ is unblocked by $X_{t-1}$: This can be efficiently verified by iterating over all elements $e'\in E_{t-1}$ and querying the oracle to ensure that $e_t$ is independent of $X_{t-1}\cup\{e'\}$. This requires at most $t-1$ queries.
\end{enumerate}

Next, consider \Cref{alg:highly-connected-alt}. The algorithm begins by sampling the set $S$ using the first $m \sim \operatorname{Bin}(n, \gamma)$ elements. It then applies Seymour's algorithm~\cite{seymour1981recognizing} to construct the graphic representation $\phi_S$ over $S$. We then pick a basis $B$ of $\M|S$. 
For each subsequent arriving element $e$, we can implement the steps of \Cref{alg:highly-connected-alt} efficiently:
\begin{itemize}
    \item We check if $e \in \cl(S)$ by querying if $B \cup \{e\}$ is dependent. This requires a single independence oracle call.
    \item If $e \in \cl(S)$, we find the fundamental circuit $C(e, B)$ in $\M|(S \cup \{e\})$ using at most $|B|$ independence oracle queries (by standard circuit extraction).
    \item We then check if $P = C(e, B) \setminus \{e\}$ is a path in the graph $G(\phi_S)$. Since $G(\phi_S)$ is explicitly constructed, this is a purely graph-theoretic check and requires no oracle queries.
    \item If $P$ is a path, we map $e$ to the endpoints of $P$ in $G(\phi_S)$ in $O(1)$ time.
\end{itemize}
Thus, the online phase of \Cref{alg:highly-connected-alt} is extremely efficient and does not require running Seymour's algorithm or max-flow computations online. The subsequent reduction to handle dummy elements (\Cref{alg:highly-connected-reduction}) and the selection procedure using the algorithm of Banihashem et al.~\cite{BanihashemHKKMO25} also run in polynomial time. 

Since both algorithms process each element using operations that run in $\operatorname{poly}(n)$ time, the total number of oracle queries and the computational time for the combined algorithm are polynomial in $n$.

\section{Conclusion}

The graphic matroid secretary problem with an unknown graph remained an open problem for at least a decade and motivated the development of techniques that highlighted barriers for natural classes of algorithm to be constant competitive for the secretary problem~\cite{bahrani2021formal}. Until this paper, the unknown graphic matroid secretary problem remained one of the candidate classes for a super-constant lower bound for MSP.

The natural question is whether the techniques in this paper offer a way forward for the general problem. While many barriers still remain, it hints that connectivity with respect to a sample is an useful tool for partitioning the matroid elements into classes that can be handled with different techniques. While the notion of connectivity we define may appear heavily dependent on the graphic representation, it is a version of general notion of Tutte-connectivity functions that is studied in matroid theory (with the only difference that we define Tutte-connectivity with respect to a sample). See~\cite{tutte1966connectivity} or Chapter 8 of~\cite{oxley2011matroid}. In fact, our weakly connected part of the proof can be written entirely in purely-matroid terms. We use vertex covers in the proof for simplicity of presentation, but we could equivalently re-write the analysis in terms of Tutte-separators. We formally develop this idea and partial extension in \Cref{sec:beyond_graphic}. The strongly connected section, however, depends deeply on a graphic representation.

Nevertheless, the success of this structural dichotomy demonstrates that sample-based connectivity is a powerful algorithmic lens. We are optimistic that leveraging the purely matroidal analog could serve as a crucial stepping stone toward resolving the unknown matroid secretary problem for more general classes of matroids (such as regular, binary, and linear).

\bibliographystyle{alpha} 
\bibliography{refs}

@inproceedings{KorulaP09,
  author       = {Nitish Korula and
                  Martin P{\'{a}}l},
  title        = {Algorithms for Secretary Problems on Graphs and Hypergraphs},
  booktitle    = {International Colloquium on Automata, Languages and Programming ({ICALP} 2009)},
  pages        = {508--520},
  year         = {2009},
}

@inproceedings{santiago2023simple,
  title={Simple random order contention resolution for graphic matroids with almost no prior information},
  author={Santiago, Richard and Sergeev, Ivan and Zenklusen, Rico},
  booktitle={Symposium on Simplicity in Algorithms (SOSA 2023)},
  pages={84--95},
  year={2023},
  organization={SIAM}
}

@article{oveis2013variants,
  title={On variants of the matroid secretary problem},
  author={Oveis Gharan, Shayan and Vondr{\'a}k, Jan},
  journal={Algorithmica},
  volume={67},
  number={4},
  pages={472--497},
  year={2013},
  publisher={Springer}
}

@article{truemper1980whitney,
  title={On Whitney's 2-isomorphism theorem for graphs},
  author={Truemper, Klaus},
  journal={Journal of Graph Theory},
  volume={4},
  number={1},
  pages={43--49},
  year={1980},
  publisher={Wiley Online Library}
}

@inproceedings{santiago2023constant,
  title={Constant-competitiveness for random assignment matroid secretary without knowing the matroid},
  author={Santiago, Richard and Sergeev, Ivan and Zenklusen, Rico},
  booktitle={International Conference on Integer Programming and Combinatorial Optimization (IPCO 2023)},
  pages={423--437},
  year={2023},
  organization={Springer}
}

@article{seymour1981recognizing,
  title={Recognizing graphic matroids},
  author={Seymour, Paul D.},
  journal={Combinatorica},
  volume={1},
  number={1},
  pages={75--78},
  year={1981},
  publisher={Springer}
}

@article{soto2021strong,
  title={Strong algorithms for the ordinal matroid secretary problem},
  author={Soto, Jos{\'e} A and Turkieltaub, Abner and Verdugo, Victor},
  journal={Mathematics of Operations Research},
  volume={46},
  number={2},
  pages={642--673},
  year={2021},
  publisher={INFORMS}
}

@article{babaioff2018matroid,
  title={Matroid secretary problems},
  author={Babaioff, Moshe and Immorlica, Nicole and Kempe, David and Kleinberg, Robert},
  journal=jacm,
  volume={65},
  number={6},
  pages={1--26},
  year={2018}
}

@inproceedings{babaioff2009secretary,
  title={Secretary problems: weights and discounts},
  author={Babaioff, Moshe and Dinitz, Michael and Gupta, Anupam and Immorlica, Nicole and Talwar, Kunal},
  booktitle={ACM-SIAM Symposium on Discrete Algorithms (SODA 2009)},
  pages={1245--1254},
  year={2009},
}

@inproceedings{BanihashemHKKMO25,
  author       = {Kiarash Banihashem and
                  MohammadTaghi Hajiaghayi and
                  Dariusz R. Kowalski and
                  Piotr Krysta and
                  Danny Mittal and
                  Jan Olkowski},
  title        = {Beating Competitive Ratio 4 for Graphic Matroid Secretary},
  booktitle    = {European Symposium on Algorithms (ESA 2025)},
  series       = {LIPIcs},
  volume       = {351},
  pages        = {52:1--52:16},
  year         = {2025},
}

@inproceedings{babaioff2007matroids,
  title={Matroids, secretary problems, and online mechanisms},
  author={Babaioff, Moshe and Immorlica, Nicole and Kleinberg, Robert},
  booktitle={ACM-SIAM Symposium on Discrete Algorithms (SODA 2007)},
  pages={434--443},
  year={2007}
}

@inproceedings{cristi2024online,
  title={Online Matroid Embeddings},
  author={Cristi, Andr{\'e}s and D{\"u}tting, Paul and Kleinberg, Robert and Paes Leme, Renato and Patel, Neel},
  booktitle={International Colloquium on Automata, Languages and Programming  (ICALP 2026)},
  year={2026},
  note = {Forthcoming}
}

@INPROCEEDINGS{Lachish14,
  author={Lachish, Oded},
  booktitle={Symposium on Foundations of Computer Science (FOCS 2014)}, 
  title={{O}(log log Rank) Competitive Ratio for the Matroid Secretary Problem}, 
  year={2014},
  pages={326-335},
}

@article{FeldmanSZ18,
  author       = {Moran Feldman and
                  Ola Svensson and
                  Rico Zenklusen},
  title        = {A Simple \emph{O}(log log(rank))-Competitive Algorithm for the Matroid
                  Secretary Problem},
  journal      = {Mathematics of Operations Reserch},
  volume       = {43},
  number       = {2},
  pages        = {638--650},
  year         = {2018},
}

@inproceedings{bahrani2021formal,
  title={Formal barriers to simple algorithms for the matroid secretary problem},
  author={Bahrani, Maryam and Beyhaghi, Hedyeh and Singla, Sahil and Weinberg, S Matthew},
  booktitle={International Conference on Web and Internet Economics (WINE 2021)},
  pages={280--298},
  year={2021},
  organization={Springer}
}

@article{soto2013matroid,
  title={Matroid secretary problem in the random-assignment model},
  author={Soto, Jos{\'e} A},
  journal={SIAM Journal on Computing},
  volume={42},
  number={1},
  pages={178--211},
  year={2013},
  publisher={SIAM}
}

@article{dinitz2014matroid,
  title={Matroid secretary for regular and decomposable matroids},
  author={Dinitz, Michael and Kortsarz, Guy},
  journal={SIAM Journal on Computing},
  volume={43},
  number={5},
  pages={1807--1830},
  year={2014},
}

@inproceedings{im2011secretary,
  title={Secretary problems: {L}aminar matroid and interval scheduling},
  author={Im, Sungjin and Wang, Yajun},
  booktitle={ACM-SIAM Sympoisum on Discrete Algorithms (SODA 2011)},
  pages={1265--1274},
  year={2011},
}

@inproceedings{jaillet2013advances,
  title={Advances on matroid secretary problems: {F}ree order model and laminar case},
  author={Jaillet, Patrick and Soto, Jos{\'e} A and Zenklusen, Rico},
  booktitle={International Conference on Integer Programming and Combinatorial Optimization (IPCO 2013)},
  pages={254--265},
  year={2013},
}

@book{oxley2011matroid,
  title={Matroid theory},
  author={Oxley, James},
  volume={3},
  year={2011},
  publisher={Oxford University Press, USA}
}

@InProceedings{10.1007/978-3-031-93112-3_10,
author="B{\'e}rczi, Krist{\'o}f
and Livanos, Vasilis
and Soto, Jos{\'e} A.
and Verdugo, Victor",
title="Matroid Secretary via Labeling Schemes",
booktitle="Integer Programming and Combinatorial Optimization (IPCO 2025)",
year="2025",
pages="128--141",
isbn="978-3-031-93112-3"
}

@article{tutte1966connectivity,
  title={Connectivity in matroids},
  author={Tutte, William T},
  journal={Canadian Journal of Mathematics},
  volume={18},
  pages={1301--1324},
  year={1966},
  publisher={Cambridge University Press}
}

\appendix

\section*{AI and LLM Disclosure}
We used Artificial Intelligence (AI) and Large Language Models (LLMs) in the preparation of this manuscript for editorial assistance, including polishing, rephrasing, and improving the clarity of parts of the introduction and related work. We also used advanced models for mathematical proofreading. All mathematical claims have been verified by the authors. No citations or BibTeX entries were generated using generative AI.

\section{Random Order OCRS for Unknown Graphic Matroids}\label{sec:OCRS}

In this section, we apply our technique to extend the result of Santiago and Wang~\cite{santiago2023simple} and provide a constant-competitive random-order OCRS for unknown graphic matroids without knowing the endpoints.

\paragraph{Random Order OCRS Setting.}
As a corollary of our techniques and results, we obtain a constant \emph{balanced} random-order CRS for unknown graphic matroids, strengthening the result of Santiago and Wang~\cite{santiago2023simple}, who proposed a $\frac{1}{96}$-balanced random-order CRS when upon arrival of the elements, the endpoints of the edges are also revealed. Their algorithm heavily relies on the information of the endpoints of the edges and does not extend to the case when the endpoints are not revealed. Hence, the more general version of the problem was left as an open problem in \cite{santiago2023simple}.

In OCRS, we are given a vector $\vec x\in \P_{\I}$ feasible in the relaxation of the underlying graphic matroid. Each element $e \in E$ is either active, with probability $x_e$, or inactive, with probability $1 - x_e$, independently of the other elements.

The elements arrive in a random order and whether an element is active is revealed upon its arrival. We assume that the graphic matroid and $\vec x$ are unknown upfront. However, upon arrival of the element $e$, along with its active or inactive status, we learn the corresponding fraction $x_e$. Whenever an arriving element $e \in E$ is active, the algorithm must decide irrevocably whether to select it or not. At all times, the set of currently selected elements $X$ must satisfy $X\in \I$. The goal here is to design a $c$-balanced random-order \emph{contention resolution scheme} which selects each active element with probability $\geq c$.

\paragraph{OCRS Construction.}
Since we can identify the endpoints of edges in $\EH\setminus S$ upon their arrival, we can obtain a constant-competitive random-order CRS for these edges as a corollary of the random-order CRS from \cite{santiago2023simple}. 
\begin{lemma}\label{lem:CRS_EH}
    Conditioned on the sample $S$, there exists a $\frac{1}{96}$-balanced random-order CRS for edges in $\EH\setminus S$ with unknown $\vec x\in \P_{\I}$ which is revealed online. 
\end{lemma}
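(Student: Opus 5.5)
Conditioned on the sample $S$, the plan is to reduce to the known-endpoint random-order CRS of \cite{santiago2023simple}. I would use \Cref{alg:highly-connected-alt} as a preprocessing layer, followed by the dummy-handling reduction \Cref{alg:highly-connected-reduction}, and then feed the annotated stream to their scheme.

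First, fix $S$ and build $\phi_S$ and the basis $B$ exactly as in \Cref{alg:highly-connected-alt}. By \Cref{thm:representation_construction}, every arriving element of $E\setminus S$ is emitted either as a dummy or with a vertex pair. The non-dummy elements $A$ receive a valid graphic representation $\phi_{S\cup A}$, and every element of $\EH\setminus S$ lies in $A$. The annotation of each element depends only on $S$ and the element itself, never on activity statuses or on other arrivals. So the annotated stream is still in uniformly random order among the elements of $E\setminus S$, since, conditioned on $m$, the suffix is a uniformly random permutation of $E\setminus S$.

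Second, I would handle the dummies as in \Cref{alg:highly-connected-reduction}. Each dummy is mapped to a fresh random pair of auxiliary vertices in $V''$ and assigned $x_e=0$, or equivalently is treated as always inactive. By \Cref{lem:dummy_reduction}, the result is a random-order stream of edges of a known-endpoint graph $G'$ that is a supergraph of $\phi_{S\cup A}(A)$.

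Third, I need to check that the induced vector is feasible for the known-graph scheme. Let $x'$ agree with $x$ on $A$ and be $0$ on the dummy edges. Since $x\in\P_\I$ and $\M|A$ is a restriction of $\M$, the restriction $x|_A$ lies in the independence polytope of $\M|A$. Because $\phi_{S\cup A}$ represents $\M|(S\cup A)$, independence in $\M|A$ coincides with acyclicity in the graph. Padding with zeros keeps $x'$ in the graphic matroid polytope of $G'$. Each $x_e$ is revealed on arrival, which is exactly the information model assumed in \cite{santiago2023simple}.

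Finally, I would run the $\frac{1}{96}$-balanced random-order CRS of \cite{santiago2023simple} on this stream. Any set it selects is acyclic in $G'$, so its non-dummy part is independent in $\M$ by \Cref{lemma:representation-valid}. Each active element of $A$, and in particular each active element of $\EH\setminus S$, is then selected with probability at least $\frac{1}{96}$.

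The main obstacle is confirming that the black-box hypotheses of \cite{santiago2023simple} still hold after conditioning on $S$. Two points need care. First, activity statuses must remain independent of the annotation; this holds because the annotation uses only the independence structure and not activity. Second, a fully random order must hold on the remaining elements. I would argue this by conditioning on the set $S$ and noting that the random arrival order restricted to $E\setminus S$ is uniform and independent of the activity coins.
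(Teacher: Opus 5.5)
Your proposal is correct and follows the same route as the paper: conditioned on $S$, use the online endpoint identification of \Cref{thm:representation_construction} to turn the remaining stream into a known-graph instance, then invoke the $\frac{1}{96}$-balanced random-order CRS of \cite{santiago2023simple}. Your extra checks are all details the paper's two-line proof leaves implicit, and you handle them correctly: zero-padding the dummies via \Cref{alg:highly-connected-reduction}, showing that the restricted $\vec x$ lies in the graphic polytope, and showing that the uniform order and the independence of the activity coins survive conditioning on $S$.
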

\begin{proof}
    Due to the $\frac{1}{96}$-balanced random-order CRS for known graphic matroids \cite{santiago2023simple}, when the edges reveal their endpoints in the underlying graphic representation of the matroid, we can resolve contentions. By \Cref{thm:representation_construction}, we can identify the endpoints of edges in $\EH \setminus S$ upon their arrival, allowing us to reconstruct their representation online and apply the OCRS.
\end{proof}
Combining the above lemma with \Cref{alg:weakly-connected}, we obtain a random-order OCRS for unknown graphic matroids with unknown $\vec x$ revealed online.

\begin{theorem}\label{thm:crs}
There exists a $\frac{1}{287}$-balanced random-order CRS for unknown graphic matroids with unknown $\vec x\in \P_{\I}$ which is revealed online, even when the endpoints of edges are not revealed upon arrival.
\end{theorem}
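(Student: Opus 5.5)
The plan is to mimic the proof of \Cref{thm:main}, replacing weights by the per-element ``balanced'' guarantee. We draw the sample $S$ (each element independently with probability $\gamma$, realized as a $\operatorname{Bin}(n,\gamma)$ prefix). We then flip a coin with bias $p$. With probability $p$ we run a CRS version of \Cref{alg:weakly-connected}; with probability $1-p$ we run the scheme of \Cref{lem:CRS_EH} on $E\setminus S$. The CRS version of \Cref{alg:weakly-connected} processes the sample elements $e_1,\dots,e_m$, and when $e_t$ is active, $Z_t=1$, and $e_t$ is unblocked by $(X_{t-1},E_{t-1})$, it selects $e_t$. Inactive elements are never selected but still count as arrived elements in $E_{t-1}$. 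The goal is to show that every element $e$ is selected with probability at least $c$ conditioned on being active, for a constant $c \ge 1/287$.

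\textbf{Weak part.} I would first check that \Cref{lemma:nodes-likely-disconnected_new,lemma:blocked_implies_reach_cut,lemma:likely-not-blocked_new} go through verbatim when selections are further restricted to active elements. In \Cref{lemma:nodes-likely-disconnected_new}, define $\tau$ as the first time an \emph{active} edge would connect $u,v$ while also being selectable. Such an edge is unblocked, so it is selected iff $Z_\tau=1$. If it is rejected, every later edge connecting $u,v$ is blocked by $e_\tau\in E_{t-1}$, exactly as before. Hence $\Pr[u,v \text{ connected in } \phi^*(X)]\le\beta$. \Cref{lemma:blocked_implies_reach_cut} is purely structural and does not change. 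The union bound then gives that an active $e\in S\cap\EW$ is unblocked with probability at least $1-4\beta$. Because its own activity and $Z_t$ are independent of the history, $e$ is selected with probability at least $\beta(1-4\beta)=1/16$ for $\beta=1/8$. The guarantee holds conditioned on $S$ and on $e$ being active.

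\textbf{Strong part and combination.} For \Cref{lem:CRS_EH} I must check that the restricted vector $\vec x|_{E\setminus S}$ still lies in the matroid polytope of the restriction; this is immediate because restriction preserves the rank inequalities. Dummy or non-$S$-aligned elements can simply be ignored, since \Cref{thm:representation_construction} gives a valid graphic representation of the non-dummy elements, which contain $\EH\setminus S$. Thus every active $e\in\EH\setminus S$ is selected with probability at least $1/96$.

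For the combination, fix $e$ and let $q_e=\Pr[e\in\EW]$. As in \Cref{thm:main}, this event is independent of $e\in S$, and both are independent of $e$'s activity. Therefore
\[
\Pr[e \text{ selected}\mid e \text{ active}]\ \ge\ p\gamma q_e\cdot\tfrac1{16}+(1-p)(1-\gamma)(1-q_e)\cdot\tfrac1{96}\ \ge\ \min\Bigl\{\tfrac{p\gamma}{16},\tfrac{(1-p)(1-\gamma)}{96}\Bigr\}.
\]
Taking $p=\gamma$ and balancing gives $\gamma/(1-\gamma)=1/\sqrt6$, i.e.\ $\gamma=1/(1+\sqrt6)$. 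This yields roughly $1/190$, which in particular exceeds $1/287$; any convenient rational choice near this point, such as $p=\gamma=1/4$, already gives at least $1/287$.

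\textbf{Main obstacle.} I expect the main subtlety to be the adaptation of \Cref{lemma:nodes-likely-disconnected_new} together with the independence claims. I must make sure that the stopping time $\tau$ is defined using only the history and activity revealed so far, so that $Z_\tau$ remains a fresh coin. I also need the blocking witness $e_\tau$ to be a legitimately arrived element even if it was active-but-rejected. Once this is settled, the rest is bookkeeping.
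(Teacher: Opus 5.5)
Your proposal is correct and follows the paper's route: randomize between the active-only version of \Cref{alg:weakly-connected} on $S$ (per-element guarantee $\beta(1-4\beta)$ via \Cref{lemma:likely-not-blocked_new}) and the $\frac{1}{96}$-balanced scheme of \Cref{lem:CRS_EH} on $E\setminus S$. As in the paper, you then use that $e\in S$ is independent of $e\in\EW$ to get $\min\{p\gamma\beta(1-4\beta),\,(1-p)(1-\gamma)/96\}$.

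Two differences work in your favor. First, taking $\tau$ to be the first \emph{unblocked} connecting edge is the correct stopping time. The paper's claim that the first edge which would connect $u,v$ is automatically unblocked can fail, because an earlier rejected edge that did not yet connect $u,v$ when it arrived can block it. Second, your choice $p=\gamma$ is the optimal balancing and gives about $1/190$ ($1/256$ already at $p=\gamma=1/4$). The paper's parameters $\gamma=0.632$ and $p'=0.912$ (for the strong scheme) give only about $1/287.7$.
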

\begin{proof}
We use two schemes and randomize between them:
\begin{itemize}
\item For edges in $\EH\setminus S$, we can identify endpoints online and apply the $\frac{1}{96}$-balanced random-order CRS from \cite{santiago2023simple} (\Cref{lem:CRS_EH}).
\item For edges in $\EW\cap S$, we run \Cref{alg:weakly-connected} treating inactive edges as never eligible; \Cref{lemma:likely-not-blocked_new} implies that each active edge in $\EW\cap S$ is selected with probability at least $\rho_W=\beta(1-4\beta)$.
\end{itemize}
Each edge belongs to $S$ with probability $\gamma$ (independent of its classification into $\EW$ or $\EH$), hence the per-edge guarantee is at least
\[
\min\left\{p'\cdot \frac{1-\gamma}{96},\; (1-p')\cdot \gamma\cdot \rho_W\right\}.
\]
Choosing $\beta=\frac{1}{8}$ (so $\rho_W=\frac{1}{16}$), $\gamma=0.632$, and $p'=0.912$ equalizes the two terms and yields a balance of at least $\frac{1}{287}$.
\end{proof}

\section{Partial Extension to General and Linear Matroids}\label{sec:beyond_graphic}

The weakly connected algorithm (\Cref{alg:weakly-connected}) only needs an independence oracle, and thus can be run on any abstract matroid $\M$. In \Cref{sec:weakly_connected} we analyzed its performance on graphic matroids. In this section, we provide partial results on the algorithm's performance in matroid settings beyond graphic matroids.

First, we prove a fundamental lemma showing that the probability of any element entering the closure of the selected set is bounded. This is the abstract matroid equivalent of \Cref{lemma:nodes-likely-disconnected_new}.

\begin{claim}
In a matroid $M$, let $X\subseteq X'$ be independent sets, and let $e, e'$ and $f$ be distinct elements. Assume that
$X\cup\{e\}$,  $X'\cup\{f\}$ and $X'\cup\{e'\}$ are independent sets, but
$X\cup\{e, f\}$ and $X'\cup\{e', f\}$ are dependent.
Then, $X'\cup\{e', e\}$ is a dependent set.
\end{claim}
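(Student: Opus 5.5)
We argue through closures. The hypotheses say $f\notin\cl(X')$ (since $X'\cup\{f\}$ is independent, hence so is $X\cup\{f\}$). Dependence of $X\cup\{e,f\}$, with both $X\cup\{e\}$ and $X\cup\{f\}$ independent, therefore gives $f\in\cl(X\cup\{e\})$. Likewise, dependence of $X'\cup\{e',f\}$, with $X'\cup\{e'\}$ and $X'\cup\{f\}$ independent, gives $f\in\cl(X'\cup\{e'\})$.

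The main step is the exchange (Mac Lane--Steinitz) property of closure. We first record that $f\in\cl(X\cup\{e\})\subseteq\cl(X'\cup\{e\})$, using $X\subseteq X'$ and monotonicity of closure. Now suppose, for contradiction, that $X'\cup\{e',e\}$ is independent.
\begin{itemize}
\item If $e\in\cl(X')$, then $f\in\cl(X'\cup\{e\})=\cl(X')$. This contradicts the independence of $X'\cup\{f\}$.
\item So $e\notin\cl(X')$. Since $f\in\cl(X'\cup\{e\})\setminus\cl(X')$, the exchange property gives $e\in\cl(X'\cup\{f\})$.
\item Combining this with $f\in\cl(X'\cup\{e'\})$ yields $e\in\cl(X'\cup\{f\})\subseteq\cl(X'\cup\{e'\})$.
\end{itemize}
Thus $X'\cup\{e',e\}$ is dependent after all, contradicting the assumption. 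Hence $X'\cup\{e',e\}$ is dependent, which is the claim (and also covers the case $e\in\cl(X')$ directly).

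The only subtle point is justifying the exchange property from the axioms given. We will use the standard rank argument. Since $f\notin\cl(X')$, we have $r(X'\cup\{f\})=r(X')+1$. Since $f\in\cl(X'\cup\{e\})$, we have $r(X'\cup\{e,f\})=r(X'\cup\{e\})=r(X')+1$. Therefore $r(X'\cup\{f,e\})=r(X'\cup\{f\})$, i.e., $e\in\cl(X'\cup\{f\})$.

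The distinctness hypotheses ensure the sets above are well formed. We expect no real obstacle beyond this careful case split on whether $e\in\cl(X')$.
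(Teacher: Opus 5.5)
Your proof is correct and follows essentially the same closure/rank argument as the paper, which also establishes $e\in\cl(X'\cup\{f\})$ and combines it with the mutual spanning of $e'$ and $f$ over $X'$ to conclude $\rk(X'\cup\{e,e'\})\le|X'|+1$. The one difference is that the paper gets $e\in\cl(X\cup\{f\})$ directly at the level of $X$ and lifts it by monotonicity; this is valid because $X\cup\{f\}$ is independent (as you noted), so your exchange step and the case split on $e\in\cl(X')$ can be skipped. One small point to make explicit: deducing $f\notin\cl(X')$ requires $f\notin X'$, which follows from the dependence of $X'\cup\{e',f\}$, not from distinctness.
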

\begin{proof}
First, we show that $e \notin X'$. Assume for contradiction that $e \in X'$. Since $X'$ is independent, the subset $X \cup \{e\} \subseteq X'$ is also independent. Since $X \cup \{e, f\}$ is dependent, we must have $f \in \cl(X \cup \{e\}) \subseteq \cl(X')$. This contradicts the assumption that $X' \cup \{f\}$ is independent. Thus, $e \notin X'$. Symmetrically, since $X' \cup \{e'\}$ is independent, we have $e' \notin X'$. Since $e, e'$ are distinct, the set $X' \cup \{e, e'\}$ has size $|X'| + 2$.

Next, since $X \cup \{e, f\}$ is dependent and $X \cup \{e\}$ is independent, we have $e \in \cl(X \cup \{f\})$. Since $X \subseteq X'$, we have $e \in \cl(X' \cup \{f\})$. Symmetrically, since $X' \cup \{e', f\}$ is dependent and $X' \cup \{e'\}$ is independent, we have $e' \in \cl(X' \cup \{f\})$.
Therefore, the set $X' \cup \{e, e'\}$ is contained in $\cl(X' \cup \{f\})$, which implies:
\[
\rk(X' \cup \{e, e'\}) \le \rk(X' \cup \{f\}) = |X'| + 1,
\]
where the equality holds because $X' \cup \{f\}$ is independent. Since the rank of $X' \cup \{e, e'\}$ is at most $|X'| + 1$ but its size is $|X'| + 2$, the set $X' \cup \{e, e'\}$ must be dependent.
\end{proof}

\begin{lemma}
\label{lemma:probability_spanned}
Let $\hat M$ be any matroid extension of $M$, and let $f\in \hat M$ be any fixed element.
Let $X_{t-1}$ be the random set of
elements selected by \Cref{alg:weakly-connected} up to time $t-1$. Then
$$\Pr[f \in \cl_{\hat M}(X_{t-1})] \le \beta.$$
\end{lemma}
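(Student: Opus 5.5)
Our plan is to mimic the proof of \Cref{lemma:nodes-likely-disconnected_new}, with the pair of vertices $u,v$ replaced by the fixed element $f$ of the extension $\hat M$. Let $\tau$ be the first time index $s\le t-1$ at which the arriving element $e_s$ would put $f$ into the closure, i.e.\ $f\notin \cl_{\hat M}(X_{s-1})$ but $f\in\cl_{\hat M}(X_{s-1}\cup\{e_s\})$. If no such $s$ exists, then $f\notin\cl_{\hat M}(X_{t-1})$, since the closure of the selected set only grows when an element is added.

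We first show that $e_\tau$ is unblocked, so that whether it is selected is decided solely by the fresh coin $Z_\tau$. Suppose instead that $e_\tau\in\cl(X_{\tau-1}\cup\{e'\})$ for some $e'$; this also covers the case $e'=\emptyset$, i.e.\ $e_\tau\in\cl(X_{\tau-1})$. Then $f\in\cl_{\hat M}(X_{\tau-1}\cup\{e'\})$, and we must rule out that an earlier time already put $f$ in the closure. This is delicate, because $\tau$ is defined through selected sets and not through the rejected element $e'$. We therefore expect to replace the definition of $\tau$: make it the first time at which $f$ becomes dependent on $X_{s-1}\cup\{e_s\}$ while $e_s$ is unblocked. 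We then have to argue that whenever $e_s$ is blocked, it is not selected, so no blocked arrival changes $X$. Hence the first time $X$'s closure can absorb $f$ is some time at which the arriving element is unblocked, and that arrival is selected only with probability $\beta$. So define $\tau$ as the first time $s$ with $e_s$ unblocked and $f\in\cl_{\hat M}(X_{s-1}\cup\{e_s\})\setminus\cl_{\hat M}(X_{s-1})$.

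The key step is to show that if $Z_\tau=0$, then $f$ never enters $\cl_{\hat M}(X_{s})$ for any $s\le t-1$. Suppose it did. Then there is a later time $s>\tau$ with $e_s$ unblocked and selected, $X'=X_{s-1}\supseteq X=X_{\tau-1}$, $X'\cup\{f\}$ independent, and $X'\cup\{e_s,f\}$ dependent. At time $\tau$ we have $X\cup\{e_\tau\}$ independent (because $e_\tau$ is unblocked) and $X\cup\{e_\tau,f\}$ dependent. The preceding Claim, applied in $\hat M$ with $e=e_\tau$, $e'=e_s$, then gives that $X'\cup\{e_s,e_\tau\}$ is dependent. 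Since $X'\cup\{e_s\}$ is independent, this means $e_s\in\cl(X'\cup\{e_\tau\})$. As $e_\tau\in E_{s-1}$, $e_s$ is blocked, a contradiction. The hypotheses of the Claim hold in $\hat M$ because $\hat M$ restricted to the ground set $E$ is $M$.

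Finally, conditioned on the history up to $\tau$, $Z_\tau$ is independent with $\Pr[Z_\tau=1]=\beta$. Summing over the possible values of $\tau$ exactly as in \Cref{lemma:nodes-likely-disconnected_new} gives the bound $\beta$. The main obstacle we anticipate is the bookkeeping for times at which $f$'s status could change without an unblocked arrival. We need to check carefully that any change of $\cl(X)$ requires a selection, that selections happen only for unblocked elements, and that the Claim's hypotheses (independence of $X'\cup\{f\}$ and of $X\cup\{e_\tau\}$) follow from the minimality of $\tau$ and $s$.
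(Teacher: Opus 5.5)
Your proposal is correct and is essentially the paper's argument. You take the first time an arrival $e_\tau$ would put $f$ into $\cl_{\hat M}(X)$ and note it is selected with probability at most $\beta$. If it is rejected, you apply the Claim with $X=X_{\tau-1}$, $X'=X_{s-1}$, $e=e_\tau$, $e'=e_s$ to show that every later arrival that would absorb $f$ is blocked by $e_\tau$. Restricting $\tau$ to unblocked arrivals is a harmless refinement; in the paper's version a blocked first arrival is simply selected with probability $0$.

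One justification needs repair. Dependence of $X'\cup\{e_s,e_\tau\}$ gives $e_s\in\cl(X'\cup\{e_\tau\})$ because $X'\cup\{e_\tau\}$ is independent, not because $X'\cup\{e_s\}$ is. That independence holds since $e_\tau\in\cl(X')$ would force $f\in\cl(X_{\tau-1}\cup\{e_\tau\})\subseteq\cl(X')$.

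As in the paper, you also tacitly need $f$ to be a non-loop of $\hat M$, since otherwise the statement is false. The degenerate cases $f\in\{e_\tau,e_s\}$, where the Claim's distinctness hypothesis fails, must be checked separately; each is immediate.
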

\begin{proof}
Consider the first time $t$ when $f\in \cl(X_{t-1} \cup \{e_t\})$. Observe that \Cref{alg:weakly-connected} will select $e_t$ with probability $\beta$.
Conditioned on $e_t$ not getting selected, we claim that $f\notin \cl(X_{t'})$ will hold for all $t'=t, \dots, m$.

To see this, consider a time $t' > t$ when an element $e_{t'}$ arrives such that $f\in \cl(X_{t'-1} \cup \{e_{t'}\})$.
Applying the above claim with $X=X_{t-1}$, $X'=X_{t'-1}$, $e=e_t$, $e'=e_{t'}$, and $f=f$, we obtain that
$e_{t'}$ is dependent on $X_{t'-1} \cup\{e_t\}$, and hence $e_{t'}$ is blocked and will not be selected by \Cref{alg:weakly-connected}.
\end{proof}

\subsection{Tutte Connectivity and \texorpdfstring{$k$}{k}-Separations in Matroids}

Our analysis in \Cref{sec:weakly_connected} relied on a definition of connectivity (\Cref{def:connectivity}) tailored to graphic matroids. To extend the analysis to general and linear matroids, we generalize vertex cuts using Tutte's connectivity function.

\begin{definition}[Tutte Connectivity and $k$-Separation]\label{def:tutte_separation}
Let $M=(E,\I)$ be a matroid, $S\subseteq E$, and $e \in E$. Let $N = M|(S\setminus \{e\})$. 
For any partition $(A, B)$ of $S\setminus \{e\}$, the \emph{Tutte connectivity} of $(A, B)$ in $N$ is defined by
\[
\lambda_N(A) := \rk_M(A) + \rk_M(B) - \rk_M(A \cup B).
\]
Following standard matroid literature, a partition $(A, B)$ of $S\setminus \{e\}$ is a \emph{$k$-separation separating $e$ in $S$} (or a \emph{$k$-separation for $e$ w.r.t.~$S$}) if:
\begin{enumerate}
    \item $e \notin \cl_M(A)$ and $e \notin \cl_M(B)$, and
    \item $\lambda_N(A) \le k - 1$.
\end{enumerate}
We say that $e$ is \emph{$k$-separable} with respect to $S$ if it admits a $k$-separation in $S$.
\end{definition}

\begin{remark}
Under this definition, an element is $1$-separable if it admits a partition $(A, B)$ with $\lambda_N(A) \le 0$. Since $\lambda_N(A) \ge 0$ by submodularity of the rank function, a $1$-separation corresponds exactly to $\rk(A) + \rk(B) = \rk(A \cup B)$.

Note also that if $e$ has a parallel copy $p \in S$ (i.e., $e \in \cl_M(\{p\})$), then $e$ cannot admit any separation $(A, B)$, since $p \in A \implies e \in \cl_M(A)$ and $p \in B \implies e \in \cl_M(B)$. In \Cref{alg:weakly-connected}, any element parallel to a previously arrived element in $S$ is immediately blocked or spanned with probability $1$, so we can assume without loss of generality that $e$ has no parallel copies in $S$.
\end{remark}

\subsection{Performance for 1-Separable Elements in General Matroids}

For elements that are 1-separable, we can prove a performance guarantee for general matroids without any representability assumptions.

\begin{lemma}\label{lemma:1separable_unblocked}
Let $M=(E,\I)$ be a matroid, and $e\in E$. Let $(A, B)$ be a 1-separation separating $e$ in $S = E \setminus \{e\}$ (i.e., $\rk(A) + \rk(B) = \rk(A\cup B)$ with $e\notin \cl(A)$ and $e\notin \cl(B)$).
If we run \Cref{alg:weakly-connected} on $S$, the element $e$ is neither spanned by the selected set $X$ nor blocked by $(X, S)$ with probability at least $1-2\beta$.
\end{lemma}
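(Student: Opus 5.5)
The plan is to mimic the graphic argument: the two "vertex pairs" $\{u,v\}\times W$ of \Cref{lemma:likely-not-blocked_new} become the two sides $A$ and $B$ of the separation. We will find two auxiliary elements $f_A, f_B$ in a matroid extension $\hat M$ of $M$ such that $e$ spanned or blocked forces $f_A\in\cl_{\hat M}(X)$ or $f_B\in\cl_{\hat M}(X)$. Then \Cref{lemma:probability_spanned} and a union bound give probability at most $2\beta$.

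Since $\lambda_N(A)=0$, the matroid $N=M|S$ is the direct sum $M|A\oplus M|B$. Hence every circuit of $M|S$ lies entirely in $A$ or entirely in $B$. Because $e\in\cl(A\cup B)$ (otherwise $e$ is a coloop and trivially never spanned or blocked), while $e\notin\cl(A)$ and $e\notin\cl(B)$, the element $e$ "glues" the two sides.

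The natural construction is the series-type extension: adjoin a new element $f$ placed freely (modular cut) on the flat $\cl(A)\cap\cl(\{e\}\cup B)$. More precisely, take $\hat M$ with new elements $f_A$ and $f_B$, where $f_A$ is added parallel to the "projection of $e$ onto $A$". The key fact to establish is the following. For any $Y=Y_A\cup Y_B$ with $Y_A\subseteq A$ and $Y_B\subseteq B$, we have $e\in\cl(Y)$ if and only if $f_A\in\cl(Y_A)$ and $f_B\in\cl(Y_B)$. This is the matroid analogue of "a $u$–$v$ path must pass through the cut". I would build it as a 2-sum in reverse: since $(A\cup\{e\},B)$ is a 2-separation of $M|(S\cup e)$ with $e$ spanning across, $M|(S\cup e)$ is a 2-sum of $M_1$ on $A\cup\{p\}$ and $M_2$ on $B\cup\{p\}$ glued along a basepoint. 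Doubling the basepoint gives $f_A$ (the copy of $p$ in $M_1$) and $f_B$. Realizing both within a single extension $\hat M$ of $M$ needs care. Justifying this construction is the main obstacle, because \Cref{lemma:probability_spanned} requires a genuine single-element extension of $M$. If a simultaneous extension is awkward, I would apply the lemma separately to two extensions $\hat M_A\ni f_A$ and $\hat M_B\ni f_B$, which suffices for a union bound.

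Given the key fact, the event analysis goes as follows. If $e\in\cl(X)$, then both $f_A$ and $f_B$ are spanned, since $X\subseteq S$ splits into $X_A\cup X_B$. Suppose instead $e\in\cl(X\cup\{e'\})$ for some $e'\in S$, say $e'\in A$. Then applying the key fact to $Y=X\cup\{e'\}$ gives $f_B\in\cl(X_B)\subseteq\cl(X)$. Symmetrically, $e'\in B$ gives $f_A\in\cl(X)$. So in every failure case $f_A\in\cl(X)$ or $f_B\in\cl(X)$. By \Cref{lemma:probability_spanned}, each event has probability at most $\beta$, so the failure probability is at most $2\beta$, as claimed.
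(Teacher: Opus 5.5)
Your strategy is the right one, and once it is made concrete it is the paper's argument: the two bad events you union-bound over become $\{e\in\cl_M(X_A\cup B)\}$ and $\{e\in\cl_M(A\cup X_B)\}$. But there is a genuine gap at the step you flag yourself. The auxiliary elements $f_A,f_B$ are never constructed, and the constructions you sketch do not work.

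\begin{itemize}
\item \emph{Free placement on $\cl(A)\cap\cl(\{e\}\cup B)$.} In a general matroid this flat can consist of loops only. Take $M=U_{4,5}$, the cycle matroid of a $5$-cycle, with $A$ and $B$ two $2$-edge paths sharing a vertex and $e$ the fifth edge. Then $(A,B)$ is a $1$-separation for $e$, yet $\cl(A)\cap\cl(\{e\}\cup B)=\varnothing$. The free extension therefore makes $f_A$ a loop. A loop lies in $\cl(X)$ with probability $1$, so \Cref{lemma:probability_spanned}, which tacitly needs a non-loop, gives nothing.
\item \emph{The $2$-sum description.} The $2$-separation $(A\cup\{e\},B)$ yields pieces on $A\cup\{e,p\}$ and $B\cup\{p\}$, not on $A\cup\{p\}$ and $B\cup\{p\}$. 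Its basepoint plays the role of $f_B$, not $f_A$; $f_A$ would have to come from the other $2$-separation $(A,B\cup\{e\})$.
\end{itemize}
As written, your key fact rests on an object you have not shown to exist.

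The missing idea is that $f_A$ is simply $e$ viewed in the contraction $M/B$. Set $\hat M_A:=(M/B)|(A\cup\{e\})\oplus M|B$, with $e$ renamed $f_A$. This works for three reasons:
\begin{itemize}
\item Since $M|S=M|A\oplus M|B$, we have $(M/B)|A=M|A$, so $\hat M_A$ is an extension of $M|S$, the matroid the algorithm actually runs on.
\item $f_A$ is not a loop, because $e\notin\cl(B)$.
\item For $Y\subseteq S$, $f_A\in\cl_{\hat M_A}(Y)$ iff $e\in\cl_M((Y\cap A)\cup B)$.
\end{itemize}
Define $\hat M_B$ symmetrically; no simultaneous extension is needed. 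Only the forward direction of your key fact is used, and it is now just monotonicity of closure: $e\in\cl(Y)$ implies $e\in\cl(Y_A\cup B)$ and $e\in\cl(A\cup Y_B)$. With this, your case analysis and union bound go through unchanged and match the paper's proof.

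The only remaining difference is bookkeeping. The paper applies \Cref{lemma:probability_spanned} to $M/B$ on ground set $A\cup\{e\}$, which first requires arguing that the run on $S$ splits into independent runs on $M|A$ and $M|B$. Applying the lemma instead to the extensions $\hat M_A,\hat M_B$ of $M|S$ uses the run on $S$ directly and sidesteps that decomposition step.
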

\begin{proof}
Let $X_A = X\cap A$ and $X_B = X\cap B$ be the elements selected from $A$ and $B$, respectively.
Since $\rk(A) + \rk(B) = \rk(A\cup B)$, the submatroid $M|S$ is the direct sum $M|A \oplus M|B$.
Consequently, the execution of \Cref{alg:weakly-connected} on $S$ decomposes completely into two independent executions on $M|A$ and $M|B$. Specifically, the selection of $X_A$ only depends on elements in $A$ and their coin flips, and similarly for $X_B$. Thus, $X_A$ and $X_B$ are identical to the outputs of running the algorithm on $M|A$ (which is isomorphic to $M/B|_A$) and $M|B$ (isomorphic to $M/A|_B$) independently.

Now consider the contracted matroid $M/B$. In this matroid, the ground set is $A \cup \{e\}$ (since $e \notin \cl(B)$). The algorithm runs on $A$ and selects $X_A$. By \Cref{lemma:probability_spanned} applied to the extension $(M/B)|(A \cup \{e\})$, the probability that $e$ enters the closure of $X_A$ in $M/B$ is at most $\beta$:
\[
\Pr[e \in \cl_{M/B}(X_A)] = \Pr[e \in \cl_M(X_A \cup B)] \le \beta.
\]
Symmetrically, by considering $M/A$ and the execution on $B$, we have:
\[
\Pr[e \in \cl_M(X_B \cup A)] \le \beta.
\]

Suppose $e$ is blocked by $X$ and $S$. By definition of blocking, there exists $e'\in S$ such that $e\in \cl(X\cup \{e'\})$.
Since $S = A \cup B$, we have either $e'\in A$ or $e'\in B$.
\begin{itemize}
    \item If $e'\in A$, then $e\in \cl(X_A \cup X_B \cup \{e'\}) \subseteq \cl(A \cup X_B)$ (since $X_A \cup \{e'\} \subseteq A$).
    \item If $e'\in B$, then $e\in \cl(X_A \cup X_B \cup \{e'\}) \subseteq \cl(X_A \cup B)$ (since $X_B \cup \{e'\} \subseteq B$).
\end{itemize}
Also, if $e$ is spanned by $X$, then $e \in \cl(X) = \cl(X_A \cup X_B)$, which is contained in both $\cl(A \cup X_B)$ and $\cl(X_A \cup B)$.

Therefore, the event that $e$ is spanned or blocked is contained in the union of the events $e\in \cl_M(X_B\cup A)$ and $e\in \cl_M(X_A\cup B)$. By union bound, this occurs with probability at most $2\beta$. Thus, $e$ is neither spanned nor blocked with probability at least $1-2\beta$.
\end{proof}

\begin{theorem}\label{thm:weakly_linear_k1}
Let $e_t$ be the arriving element at step $t$ which is 1-separable w.r.t.~$E_{t-1}$.
The probability that $e_t$ is accepted by \Cref{alg:weakly-connected} is at least $\beta(1 - 2\beta)$ (assuming no parallel elements to $e_t$ in $E_{t-1}$).
For $\beta = 1/4$, this probability is at least $1/8$.
\end{theorem}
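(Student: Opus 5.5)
The plan is to reduce the statement to \Cref{lemma:1separable_unblocked}, applied to the prefix of the run, and then use the independence of the coin flip $Z_t$. Fix the time $t$ and the history $E_{t-1}$, and let $(A,B)$ be a 1-separation of $e_t$ with respect to $E_{t-1}$. Concretely, $(A,B)$ partitions $E_{t-1}$ with $\rk(A)+\rk(B)=\rk(A\cup B)$, $e_t\notin\cl(A)$ and $e_t\notin\cl(B)$. By assumption no element of $E_{t-1}$ is parallel to $e_t$; this is automatic anyway, by the remark following \Cref{def:tutte_separation}.

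First, I will observe that the set $X_{t-1}$ selected by \Cref{alg:weakly-connected} up to time $t-1$ is exactly the output of running the algorithm on the ground set $E_{t-1}$ in the given order. The decisions at times $1,\dots,t-1$ depend only on $E_{t-1}$ and on the coins $Z_1,\dots,Z_{t-1}$.

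Next, I apply \Cref{lemma:1separable_unblocked} to the restricted matroid $M' = M|(E_{t-1}\cup\{e_t\})$, with $e=e_t$ and $S=E_{t-1}=E(M')\setminus\{e_t\}$. The separation hypotheses transfer verbatim, because rank and closure in $M'$ agree with those in $M$ on subsets of $E(M')$. The lemma then gives that, with probability at least $1-2\beta$, $e_t$ is neither spanned by $X_{t-1}$ nor blocked by $(X_{t-1},E_{t-1})$. By \Cref{def:blocked_new}, this is exactly the event that $e_t$ is unblocked.

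The one subtlety I expect here is that the lemma's internal use of \Cref{lemma:probability_spanned} considers runs on $M|A$ and $M|B$ separately. The direct-sum decomposition is what makes the blocking test on $A$-elements depend only on $X_A$: for $e''\in A$, the condition $e''\in\cl(X\cup\{e'\})$ splits across the direct sum. This is already handled inside that lemma, so I will only check that the interleaved arrival order does not matter. It does not, since each sub-run sees its elements in the induced order.

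Finally, $Z_t\sim\operatorname{Ber}(\beta)$ is drawn independently of $E_{t-1}$, of $Z_1,\dots,Z_{t-1}$, and hence of the event that $e_t$ is unblocked. Therefore
\[
\Pr[e_t \text{ accepted}] = \beta\cdot\Pr[e_t\text{ unblocked}] \ge \beta(1-2\beta).
\]
This conditional bound holds for every history, so it also holds unconditionally. Substituting $\beta=1/4$ gives $\tfrac14\cdot\tfrac12=\tfrac18$.
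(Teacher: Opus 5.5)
Your proposal is correct and follows the same route as the paper. You apply \Cref{lemma:1separable_unblocked} with $S=E_{t-1}$ to get that $e_t$ is unblocked with probability at least $1-2\beta$, then multiply by $\Pr[Z_t=1]=\beta$ using the independence of $Z_t$ from the history; restricting to $M|(E_{t-1}\cup\{e_t\})$ so the lemma's hypothesis $S=E\setminus\{e\}$ holds literally, and conditioning on the arrival history, just make explicit what the paper leaves implicit.
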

\begin{proof}
Since the coin flip $Z_t$ is independent of the history:
\[
\Pr[\text{$e_t$ is selected}] = \Pr[Z_t=1]\cdot \Pr[\text{$e_t$ is not blocked}] \ge \beta(1 - 2\beta),
\]
where we used \Cref{lemma:1separable_unblocked} to bound the probability of being spanned or blocked (with $S = E_{t-1}$).
Setting $\beta = 1/4$ maximizes $\beta(1-2\beta)$ to $1/8$.
\end{proof}

\subsection{Performance for \texorpdfstring{$k$}{k}-Separable Elements in Linear Matroids}

For linear matroids representable over a finite field $\F$, we can generalize the result to any constant connectivity $k$, with a loss depending on the field size. We first define the bottleneck flat for general $k$, which generalizes vertex cuts.

\begin{definition}[$k$-bottleneck flat]\label{def:kbottleneck}
Let $M = (E, \I)$ be a matroid and $e \notin S \subseteq E$. Let $\hat M \supseteq M$ be an extension and $\loops(\hat M):=\cl_{\hat M}(\varnothing)$ its loops.
A flat $K$ of $\hat M$ is a \emph{$k$-bottleneck flat} for $e$ w.r.t.~$S$ if $\rk_{\hat M}(K)\le k+1$ and, for every independent set $X\subseteq S$ and $e'\in S$, if $e\in \cl_{\hat M}(X\cup\{e'\})$ and $e\notin \cl_{\hat M}(\{e'\})$, then $\cl_{\hat M}(X)\cap\bigl(K\setminus \loops(\hat M)\bigr)\neq \varnothing$.
\end{definition}

For linear matroids, the existence of such a flat of small rank allows us to bound the blocking probability by a function of the field size.

\begin{lemma}[Existence of the Bottleneck Flat in Linear Matroids]\label{thm:one-flat-linear}
Let $M=(E,\I)$ be a linear matroid over field $\F$, let $S\subseteq E$ with $e\notin S$, and let $(A,B)$ be a $k$-separation separating $e$ in $S$.
Then $M|(S \cup \{e\})$ has an $\F$-representable extension $\hat M$ and a flat $K$ of rank $\rk_{\hat M}(K) \le k+1$ such that $K$ is a $k$-bottleneck flat for $e$ w.r.t.~$S$.
\end{lemma}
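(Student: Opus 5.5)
The plan is to work directly in a vector representation and take $K$ to be the span of the ``interface'' subspace between $A$ and $B$ together with the two components of $e$. Fix a representation of $M|(S\cup\{e\})$ by vectors in $\F^d$, and write $\langle Y\rangle$ for the linear span of a set of vectors $Y$. Put $U:=\langle A\rangle\cap\langle B\rangle$. By the modular law for subspaces, $\dim U=\rk(A)+\rk(B)-\rk(A\cup B)=\lambda_N(A)\le k-1$.

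\textbf{Step 1: decompose $e$.} Since $(A,B)$ separates $e$ in $S$, we have $e\notin\cl_M(A)$ and $e\notin\cl_M(B)$. I will also assume $e\in\cl_M(A\cup B)$. This is implicit in the notion of a separation separating $e$; if it fails, $e$ is never spanned by $X\cup\{e'\}$ for $X\cup\{e'\}\subseteq S$, so the bottleneck condition is vacuous. Hence we can fix vectors $a_0\in\langle A\rangle$ and $b_0\in\langle B\rangle$ with $e=a_0+b_0$.

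\textbf{Step 2: build $\hat M$ and $K$.} Let $\hat M$ be the matroid represented by the original vectors together with new elements: $a_0$, $b_0$, and a basis of $U$. It is $\F$-representable and extends $M|(S\cup\{e\})$. Let
\[
K:=\cl_{\hat M}\bigl(U\cup\{a_0,b_0\}\bigr).
\]
Then $K$ is a flat with $\rk_{\hat M}(K)\le \dim U+2\le k+1$. Note that $e\in K$, and that $K$ contains every vector of $a_0+U$ and of $b_0+U$.

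\textbf{Step 3: verify the bottleneck property.} Let $X\subseteq S$ be independent and $e'\in S$ with $e\in\cl(X\cup\{e'\})$. Split $X=X_A\cup X_B$ with $X_A:=X\cap A$ and $X_B:=X\cap B$.

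Suppose first that $e'\in A$. Then $e=a+b$ with $a\in\langle X_A\cup\{e'\}\rangle\subseteq\langle A\rangle$ and $b\in\langle X_B\rangle\subseteq\langle B\rangle$.
\begin{itemize}
\item We have $b\neq 0$: otherwise $e=a\in\langle A\rangle$, contradicting $e\notin\cl(A)$.
\item The decomposition is unique modulo $U$. Indeed $b-b_0=a_0-a$ lies in both $\langle A\rangle$ and $\langle B\rangle$, so $b\in b_0+U\subseteq K$.
\end{itemize}
Thus $b$ is a nonzero vector, hence a nonloop of $\hat M$, lying in $\cl_{\hat M}(X_B)\subseteq\cl_{\hat M}(X)$ and in $K$. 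To read this as a matroid statement, also add $b$ (or any representative of $b_0+U$) as an element of $\hat M$. Alternatively, phrase the condition through subspaces: $\cl_{\hat M}(X)\cap(K\setminus\loops(\hat M))\neq\varnothing$ holds as soon as the subspaces $\langle X\rangle$ and $\langle K\rangle$ share a nonzero vector, and $\hat M$ can be taken to contain all vectors of $\F^d$.

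The case $e'\in B$ is symmetric: now $a\in\langle X_A\rangle$, $a\ne 0$ because $e\notin\cl(B)$, and $a\in a_0+U\subseteq K$. The hypothesis $e\notin\cl(\{e'\})$ is not even needed in this argument.

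\textbf{Main obstacle.} The only delicate point is the bookkeeping in Step 3. The definition of a $k$-bottleneck flat asks that $\cl_{\hat M}(X)$ meet $K$ in an actual nonloop element of the ground set of $\hat M$. The witness vector $b$ (or $a$) depends on $X$ and $e'$, so it need not be one of $a_0$, $b_0$, or the chosen basis of $U$.

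I would resolve this by taking $\hat M$ to be the matroid on all vectors of the finite space $\F^d$; this is legitimate because $\F$ is finite. With that choice, every nonzero vector of $\langle X\rangle\cap\langle K\rangle$ is a ground-set element lying in $\cl_{\hat M}(X)$ and in $K\setminus\loops(\hat M)$. The rank bound on $K$ is unaffected, since $K$ is still the closure of $U\cup\{a_0,b_0\}$.
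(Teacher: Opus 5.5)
Your proof is correct and is essentially the paper's argument: your subspace $U+\langle a_0,b_0\rangle$ coincides (when $e\in\cl(S)$) with the paper's $W=(\langle A\rangle+\langle e\rangle)\cap(\langle B\rangle+\langle e\rangle)$. Your witness $b\in\langle X_B\rangle$ is exactly the paper's $w_{X,e'}=x_B$, which the paper adds as a new element $k_{X,e'}$ for each of the finitely many witness pairs $(X,e')$, taking $K$ to be the closure of these.

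The lemma does not assume $\F$ is finite, so use that per-witness extension (your ``also add $b$'' option) rather than adding all of $\F^d$. Note also that it must be the specific $b\in\langle X_B\rangle$ that is added, not an arbitrary representative of $b_0+U$, which need not lie in $\cl_{\hat M}(X)$.
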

\begin{proof}
Consider a vector representation of $M|(S \cup \{e\})$ over $\F$. Let $U_A:=\langle A\rangle$, $U_B:=\langle B\rangle$, and $\ell_e:=\langle e\rangle$.
Let $W:=(U_A+\ell_e)\cap (U_B+\ell_e)$. By the dimension formula, $\dim W \le k+1$, since $\lambda(A) = \rk_M(A) + \rk_M(B) - \rk_M(A\cup B) \le k - 1$.

For each witness pair $(X,e')$ with $X\subseteq S, e'\in S$ where $e\in \cl(X\cup\{e'\})$ and $e\notin \cl(\{e'\})$, we can construct a non-zero vector $w_{X,e'}\in W\cap \langle X\rangle$ (if $e'\in A$, then $e = u_A + x_B$ for $u_A\in U_A, x_B\in\langle X_B\rangle$, so we set $w_{X,e'} = x_B \in W \cap \langle X \rangle$). We add a new element $k_{X,e'}$ represented by $w_{X,e'}$.
Let $\hat M$ be the resulting linear extension, and let $K := \cl_{\hat M}(\{k_{X,e'}\})$ be the flat generated by these new elements.
Since all generators lie in $W$, $\rk_{\hat M}(K)\le \dim W \le k+1$.
The bottleneck property follows by construction: any independent set $X$ and element $e'$ spanning $e$ will span one of the constructed vectors in $K$.
\end{proof}

\begin{claim}\label{claim:flat_size_linear}
Let $M$ be a linear matroid representable over a finite field $\F$. Any flat $K$ of rank $r \le k+1$ in $M$ contains at most $|\F|^{k+1}$ elements (up to parallel copies).
\end{claim}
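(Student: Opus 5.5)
We pass to a coordinate representation of $M$ and count vectors in a subspace over a finite field. Since $M$ is $\F$-representable, fix a representation $\psi: E \to \F^d$ such that a set is independent iff its image is linearly independent. A flat $K$ of rank $r$ consists exactly of the elements whose vectors lie in $U := \langle \psi(K)\rangle$, since a flat is closed under spanning. Also, $\dim U = r \le k+1$.

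The plan is to bound the number of distinct elements of $K$ modulo parallelism by the number of vectors in $U$. Loops map to $0$; if we count them, they contribute at most one class. Two non-loop elements are parallel exactly when their vectors are nonzero scalar multiples of each other. Hence the parallel classes of non-loop elements in $K$ inject into the one-dimensional subspaces of $U$, and in particular into the nonzero vectors of $U$.

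Counting gives $|U| = |\F|^{\dim U} \le |\F|^{k+1}$. Therefore the loops, if counted, together with the non-loop parallel classes number at most $1 + (|\F|^{k+1}-1) = |\F|^{k+1}$. In fact the sharper bound $(|\F|^{k+1}-1)/(|\F|-1)$ on the number of projective points holds, but it is not needed.

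The only point needing care is the phrase ``up to parallel copies'': it must be read as counting parallel classes, treating all loops as one class. The claim is applied to the extension $\hat M$ from \Cref{thm:one-flat-linear}, which is $\F$-representable by construction, so the same argument applies to it verbatim.
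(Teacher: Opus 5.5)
Your proof is correct and follows essentially the same route as the paper's: place $K$ inside an $r$-dimensional subspace of $\F^d$, observe that non-parallel non-loop elements give distinct one-dimensional subspaces (hence distinct nonzero vectors), and bound this count by $|\F|^r \le |\F|^{k+1}$. Your explicit treatment of the loops of $K$ as a single extra parallel class adds a bit of care that the paper skips, since it tacitly assumes every element of $K$ is represented by a nonzero vector.
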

\begin{proof}
In any linear representation of $M$ over $\F$, the elements of the flat $K$ are represented by non-zero vectors in an $r$-dimensional linear subspace. Since an $r$-dimensional subspace over $\F$ contains $|\F|^r$ vectors in total, the number of distinct rank-$1$ subspaces (and thus non-parallel elements in $K$) is at most $\frac{|\F|^r - 1}{|\F| - 1} \le |\F|^r \le |\F|^{k+1}$.
\end{proof}

\begin{lemma}\label{lem:weakly_general_unblock}
For a linear matroid $M$ over a finite field $\F$, let $e_t$ be $k$-separable w.r.t.~$E_{t-1}$. The probability that $e_t$ is spanned or blocked by $X_{t-1}$ is at most $\beta |\F|^{k+1}$.
\end{lemma}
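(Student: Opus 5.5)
Fix a $k$-separation $(A,B)$ for $e_t$ w.r.t.~$E_{t-1}$. Apply \Cref{thm:one-flat-linear} with $S=E_{t-1}$ to obtain an $\F$-representable extension $\hat M$ of $M|(E_{t-1}\cup\{e_t\})$ and a $k$-bottleneck flat $K$ with $\rk_{\hat M}(K)\le k+1$. By \Cref{claim:flat_size_linear}, up to parallel copies $K\setminus \loops(\hat M)$ contains at most $|\F|^{k+1}$ elements. Pick one representative $f_1,\dots,f_N$ from each parallel class, with $N\le |\F|^{k+1}$. If some element of $K\setminus\loops(\hat M)$ lies in $\cl_{\hat M}(X_{t-1})$, then so does its representative $f_i$, because parallel elements have the same closure behaviour.

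Next I relate blocking or spanning of $e_t$ to the bottleneck property. Suppose $e_t$ is blocked, so $e_t\in\cl(X_{t-1}\cup\{e'\})$ for some $e'\in E_{t-1}$. We may assume $e_t\notin\cl(\{e'\})$, since by the standing no-parallel assumption of this section $e_t$ has no parallel copy in $S$. \Cref{def:kbottleneck} then gives $\cl_{\hat M}(X_{t-1})\cap(K\setminus\loops(\hat M))\ne\varnothing$. Now suppose $e_t$ is spanned, i.e.\ $e_t\in\cl(X_{t-1})$. Then for any $e'\in E_{t-1}$ we also have $e_t\in\cl(X_{t-1}\cup\{e'\})$, so the same conclusion holds; the degenerate case $E_{t-1}=\varnothing$ is trivial. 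Hence
\[
\Pr[e_t \text{ spanned or blocked}] \le \Pr\Bigl[\exists i:\ f_i\in \cl_{\hat M}(X_{t-1})\Bigr]\le \sum_{i=1}^{N}\Pr[f_i\in\cl_{\hat M}(X_{t-1})].
\]

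Each summand is at most $\beta$ by \Cref{lemma:probability_spanned}, applied to the extension $\hat M$ with fixed element $f_i$. This gives the bound $N\beta\le\beta|\F|^{k+1}$.

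The main obstacle is a dependence of $\hat M$, $K$ and the $f_i$ on the randomness. The construction in \Cref{thm:one-flat-linear} ranges over witness pairs $(X,e')$, so the flat must be fixed before the algorithm's coins are revealed. I would handle this by conditioning on $E_{t-1}$ and its arrival order, so that only the coins $Z_1,\dots,Z_{t-1}$ remain random. I would then take $K$ to be generated by all witness pairs with $X$ ranging over every independent subset of $E_{t-1}$, not just the realized one. The resulting $K$ is deterministic given $E_{t-1}$, its rank is still at most $\dim W\le k+1$, and \Cref{lemma:probability_spanned} applies to each fixed $f_i$ over the coin randomness. Taking expectation over the conditioning then finishes the proof.
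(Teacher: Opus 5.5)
Your proof is correct and follows the same route as the paper: you invoke the bottleneck flat from \Cref{thm:one-flat-linear}, reduce both the spanned and the blocked case to $\cl_{\hat M}(X_{t-1})$ meeting $K\setminus\loops(\hat M)$, and union-bound \Cref{lemma:probability_spanned} over the at most $|\F|^{k+1}$ parallel classes of $K$. The obstacle you flag is already handled by the paper's setup, since \Cref{def:kbottleneck} and the construction quantify over every independent $X\subseteq S$, so $K$ depends only on $E_{t-1}$ and $e_t$ (you should condition on $e_t$ as well). Your treatment of parallel-class representatives and of the spanned case fills in steps the paper leaves implicit, and the condition $e_t\notin\cl(\{e'\})$ already follows from $k$-separability itself.
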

\begin{proof}
If $e_t$ is spanned or blocked by $X_{t-1}$, then either $e_t \in \cl(X_{t-1})$ or there exists $e' \in E_{t-1}$ such that $e_t \in \cl(X_{t-1} \cup \{e'\})$. In either case, by the bottleneck flat property (\Cref{thm:one-flat-linear}), $X_{t-1}$ must span a non-loop element of $K$, i.e., $\cl(X_{t-1}) \cap (K \setminus \loops) \neq \emptyset$.
By \Cref{lemma:probability_spanned} and a union bound over the non-loop elements of $K$, this happens with probability at most $\beta |K|$.
By \Cref{claim:flat_size_linear}, since $K$ is a flat of rank $r \le k+1$ in a linear matroid over $\F$, $|K| \le |\F|^{k+1}$.
Thus, the probability that $e_t$ is spanned or blocked is at most $\beta |\F|^{k+1}$.
\end{proof}

\begin{theorem}\label{thm:weakly_linear_field}
Consider a linear matroid over a finite field $\F$.
Let $e_t$ be an arriving element at step $t$ which is $k$-separable w.r.t.~$E_{t-1}$.
The probability that $e_t$ is accepted by \Cref{alg:weakly-connected} for $\beta = \frac{1}{2|\F|^{k+1}}$ is at least $\frac{1}{4|\F|^{k+1}}$.
\end{theorem}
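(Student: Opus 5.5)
The plan mirrors the proof of \Cref{thm:weakly_linear_k1}, with \Cref{lem:weakly_general_unblock} in place of \Cref{lemma:1separable_unblocked}. At time $t$, \Cref{alg:weakly-connected} selects $e_t$ exactly when $Z_t=1$ and $e_t$ is unblocked by $(X_{t-1},E_{t-1})$. The coin $Z_t\sim\operatorname{Ber}(\beta)$ is drawn fresh, so it is independent of $X_{t-1}$ and $E_{t-1}$, and hence of the event ``$e_t$ is unblocked''. Therefore
\[
\Pr[e_t \text{ is selected}] \;=\; \beta\cdot \Pr[e_t \text{ is neither spanned nor blocked by } (X_{t-1},E_{t-1})].
\]

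Next I would apply \Cref{lem:weakly_general_unblock} with $S=E_{t-1}$. Since $e_t$ is $k$-separable w.r.t.~$E_{t-1}$, it gives that $e_t$ is spanned or blocked with probability at most $\beta|\F|^{k+1}$. So $e_t$ is accepted with probability at least $\beta(1-\beta|\F|^{k+1})$. Setting $\beta=\frac{1}{2|\F|^{k+1}}$ makes $1-\beta|\F|^{k+1}=\tfrac12$, giving the bound $\frac{1}{2|\F|^{k+1}}\cdot\frac12=\frac{1}{4|\F|^{k+1}}$. This choice of $\beta$ maximizes $\beta(1-c\beta)$ with $c=|\F|^{k+1}$, mirroring the choice $\beta=1/4$ in \Cref{thm:weakly_linear_k1}.

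The only delicate point is checking that \Cref{lem:weakly_general_unblock} applies in the form needed. First, $X_{t-1}$ must be the output of running \Cref{alg:weakly-connected} on the prefix $E_{t-1}$. This is true because the algorithm's decisions up to time $t-1$ depend only on $E_{t-1}$ and on $Z_1,\dots,Z_{t-1}$. Second, \Cref{lemma:probability_spanned} must be invoked on the extension $\hat M$ built in \Cref{thm:one-flat-linear}. That is legitimate because the added elements $k_{X,e'}$ never arrive, so they do not affect the algorithm's run on $E_{t-1}$.

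As in \Cref{thm:weakly_linear_k1}, I would also assume that $e_t$ has no parallel copy in $E_{t-1}$, which the remark following \Cref{def:tutte_separation} allows. This ensures that the witness condition $e\notin\cl(\{e'\})$ in \Cref{def:kbottleneck} covers every way $e_t$ can be blocked: if $e_t\in\cl(\{e'\})$, then $e'$ would be parallel to $e_t$ (or $e_t$ a loop). With these checks in place the calculation is immediate, and I expect no real obstacle.
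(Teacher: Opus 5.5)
Your proposal is correct and is the intended argument. The paper states this theorem without a written proof, and it follows exactly as \Cref{thm:weakly_linear_k1} does: multiply $\Pr[Z_t=1]=\beta$ by the bound $1-\beta|\F|^{k+1}$ from \Cref{lem:weakly_general_unblock}, then set $\beta=\frac{1}{2|\F|^{k+1}}$. One small correction: the absence of a parallel copy of $e_t$ in $E_{t-1}$ is not an extra assumption, because $k$-separability already forces $e_t\notin\cl(\{p\})$ for every $p\in E_{t-1}$ (as the remark after \Cref{def:tutte_separation} notes), so your witness-coverage check goes through automatically.
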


\end{document}